\theoremstyle{definition}
\newtheorem{definition}{Definition}[section]
\newtheorem{theorem}{Theorem}[section]
\newcommand{\blue}[1]{\textcolor{black}{#1}}
\begin{document}

	\title{CCA-Secure Key-Aggregate Proxy Re-Encryption for Secure Cloud Storage\thanks{\blue{A partial result of this research was presented in the 2018 IEEE Conference on Dependable and Secure Computing, Kaohsiung, Taiwan, December 10-13, 2018.}}}

	\author{Wei-Hao Chen, Chun-I Fan$^{\ast}$, and Yi-Fan Tseng
		\IEEEcompsocitemizethanks{\IEEEcompsocthanksitem W.-H. Chen is with the Department
			of Computer Science, Purdue University, West Lafayette, IN 47907,
			United States.\protect\\
			E-mail: chen4129@purdue.edu
			
			\IEEEcompsocthanksitem C.-I. Fan is with the Department of Computer Science and Engineering, National Sun Yat-sen University, Kaohsiung, Taiwan, the Information Security Research Center, National Sun Yat-sen University, Kaohsiung, Taiwan, and the Intelligent Electronic Commerce Research Center, National Sun Yat-sen University, Kaohsiung,
			Taiwan.\protect\\
			E-mail: cifan@mail.cse.nsysu.edu.tw ($^{\ast}$The corresponding author)
			 
			\IEEEcompsocthanksitem Y.-F. Tseng is with the Department of Computer Science, National Chengchi University, Taipei 11605, Taiwan.\protect\\
			E-mail: yftseng@cs.nccu.edu.tw
		}
		\thanks{}}

	\markboth{Journal of \LaTeX\ Class Files,~Vol.~14, No.~8, August~2015}%
	{Shell \MakeLowercase{\textit{et al.}}: Bare Demo of IEEEtran.cls for Computer Society Journals}
	\IEEEtitleabstractindextext{
		\begin{abstract}
			The development of cloud services in recent years has mushroomed, 
			for example, Google Drive, Amazon AWS, Microsoft Azure. Merchants 
			can easily use cloud services to open their online shops in a few 
			seconds. Users can easily and quickly connect to the cloud in their 
			own portable devices, and access their personal information effortlessly. 
			Because users store large amounts of data on third-party devices, 
			ensuring data confidentiality, availability and integrity become 
			especially important. Therefore, data protection in cloud storage 
			is the key to the survival of the cloud industry. Fortunately, 
			Proxy Re-Encryption schemes enable users to convert their 
			ciphertext into other's ciphertext by using a re-encryption key. 
			This method gracefully transforms the user's computational cost 
			to the server. In addition, with C-PREs, users can apply their access control right on 
			the encrypted data. Recently, we lowered the key storage cost of 
			C-PREs to constant size and proposed the first Key-Aggregate Proxy 
			Re-Encryption scheme. In this paper, we further prove that our scheme 
			is a CCA-secure Key-Aggregate Proxy Re-Encryption scheme in the 
			adaptive model without using random oracle. Moreover, we also 
			implement and analyze the Key Aggregate PRE application in the 
			real world scenario.
		\end{abstract}

		\begin{IEEEkeywords}
			Cloud Computing, Proxy Re-Encryption, Key-Aggregate Cryptosystem, Access Control, The Standard Model
	\end{IEEEkeywords}}

	\maketitle

	\IEEEdisplaynontitleabstractindextext

	\IEEEpeerreviewmaketitle

	\IEEEraisesectionheading{\section{Introduction}\label{sec:introduction}}

	\IEEEPARstart{C}{loud} storage services~\cite{qian2009cloud}, like Microsoft Azure~\cite{jennings2010cloud}, 
	Apple 
	iCloud~\cite{arif2019comparison}, Amazon AWS~\cite{mathew2014overview}, Google Cloud~\cite{bisong2019overview}, have penetrated into 
	every corner of life~\cite{sharma2024cloud}. In today's world, cloud technology~\cite{qian2009cloud} has become an integral part of modern life. Over the past few years, there has been a significant shift in how people store their digital assets. Instead of keeping videos, personal data, photos, music, and other materials on their devices, more individuals are now opting to store them in the cloud \cite{Singh_2017}. As long as the network can be connected, users can easily access data on the cloud. This makes accessing large amounts of data much easier. From a business perspective, merchants can easily deploy their own online stores on the cloud. Mobile app or web developers can also set up servers in the cloud to reduce the cost of storing users' data~\cite{amies2012developing, prodan2009survey}. \blue{More recently, with the rise of AI~\cite{fui2023generative, van2023ai}, cloud storage services have evolved to meet the unique demands of machine learning and data analytics~\cite{zhang2024application}. These platforms now offer specialized storage solutions optimized for AI workloads, capable of handling vast amounts of unstructured data and facilitating rapid data retrieval for model training~\cite{al2023big, kunduru2023recommendations}. These storage innovations enable businesses and researchers to efficiently collect, process, and analyze large volumes of data without the need for extensive on-premises infrastructure~\cite{al2023big, kunduru2023recommendations}.
 }
	
	The growing popularity of cloud storage services means that many people's private files are now being uploaded to remote servers. This raises security concerns~\cite{soveizi2023security, abba2024enabling}, as a successful breach of a cloud system by hackers could potentially compromise thousands of users' data~\cite{sasubilli2021cloud, yurtseven2020review}.
	To prevent user data leakage, we must ensure data authentication, integrity, and confidentiality \cite{Wang_2017, Villari_2013, Subashini_2011}. A common protection method, as shown in Fig \ref{fig:fileenc}, is encrypting data before uploading it to the cloud. This approach prevents hackers from decrypting and accessing users' private information, even if they breach the cloud system~\cite{li2020practical}. However, this method complicates data sharing. For instance, if Alice encrypts and uploads her data to the cloud, and later wants to share it with Bob, she would need to give Bob her private key through a secure channel (Fig \ref{fig:secch}). This practice is impractical as it compromises Alice's private key. Furthermore, building a secure channel is oftentimes impractical in the remote online setting. 
    
	\blue{Fortunately, Proxy Re-Encryption schemes\cite{Ateniese_2009,Canetti_2007,Deng_2008,Green_2009,Libert_TPRE_2008,Libert_uni_CCA_2008,Shao_2009,Tang_2008,Weng_C_PRE_2009,Weng_efficient_C_PRE_2009, agyekum2021proxy, manzoor2021proxy, rawal2021multi, dottling2021universal, lin2020improved, keshta2023blockchain, zhang2023identity, ge2023attribute, lin2024revocable, devaki2022re}} cleverly solve the key sharing problem using a re-encryption key. Here's how it works: When Alice wants to share her encrypted file, she generates a re-encryption key. Later, when Bob requests access to Alice's file, a proxy can use this re-encryption key to transform Alice's ciphertext into Bob's—even if Alice is offline. This elegant solution enables seamless data sharing without compromising security. Fig \ref{fig:pre} illustrates the scenario of proxy re-encryption in cloud environments.

	However, in PRE, access control of the re-encryption key is unrestricted. For instance, if Bob collaborates with the proxy, the proxy can convert all of Alice's ciphertext into Bob's using the re-encryption key. This poses a problem when Alice wishes to share only part of her information with Bob. To address this issue, Wang proposed the Conditional Proxy Re-Encryption scheme (C-PRE) \cite{Weng_C_PRE_2009}. \blue{In C-PREs \cite{Chu_2009,Fang_2009,Liang_2012,Weng_C_PRE_2009,Weng_efficient_C_PRE_2009,Xu_2016, lin2024end, yao2021novel, liang2021attribute, zhang2024conditional, hu2022efficient, tang2023attribute},} the data owner creates an encrypted file and a re-encryption key (conditional key) with a selected conditional value $w$. For example, Alice first chooses a condition value $w$ and uses it to encrypt the file. She then creates a re-encryption key (condition key) using the conditional value $w$ and Bob's public key. This approach limits the decryption rights of Alice's ciphertext to the specified conditional value. Fig \ref{fig:cpre} illustrates this scenario.

	Despite the great advancements in C-PREs, we discovered a limitation: the size of conditional keys grows proportionally with the size of conditions. This poses a significant challenge for devices with limited storage capacity. To address this issue, we introduced the Key-Aggregate Proxy Re-Encryption scheme(KA-PRE)~\cite{Chen_2018}. This scheme not only allows users to implement fine-grained access control over their files but also dramatically reduces the storage cost of re-encryption keys to a constant size. \blue{By the time we prepared this extended version of our paper, there were also other works inspired by our previous KA-PRE scheme~\cite{mahamuni2022blockchain, lin2022improved, pareek2023efficient, wu2024distributed, devaki2022re, fan2020key}. The most relevant works include those by Pareek et al.\cite{pareek2021kapre} and Fan et al.\cite{fan2020key}. Pareek et al.~\cite{pareek2021kapre} demonstrated that their Key-Aggregate Proxy Re-Encryption (KA-PRE) scheme achieves Chosen-Ciphertext Attack (CCA) security under the random oracle model. Fan et al.~\cite{fan2020key}, on the other hand, extended our original scheme to support dynamic conditions. This work builds upon these contributions, advancing the state of the art by providing a rigorous security proof that achieves CCA security without relying on the random oracle model, thus offering stronger security guarantees in standard model.}

	\begin{figure}[h]
		\centering
		\includegraphics[width=0.5\textwidth]{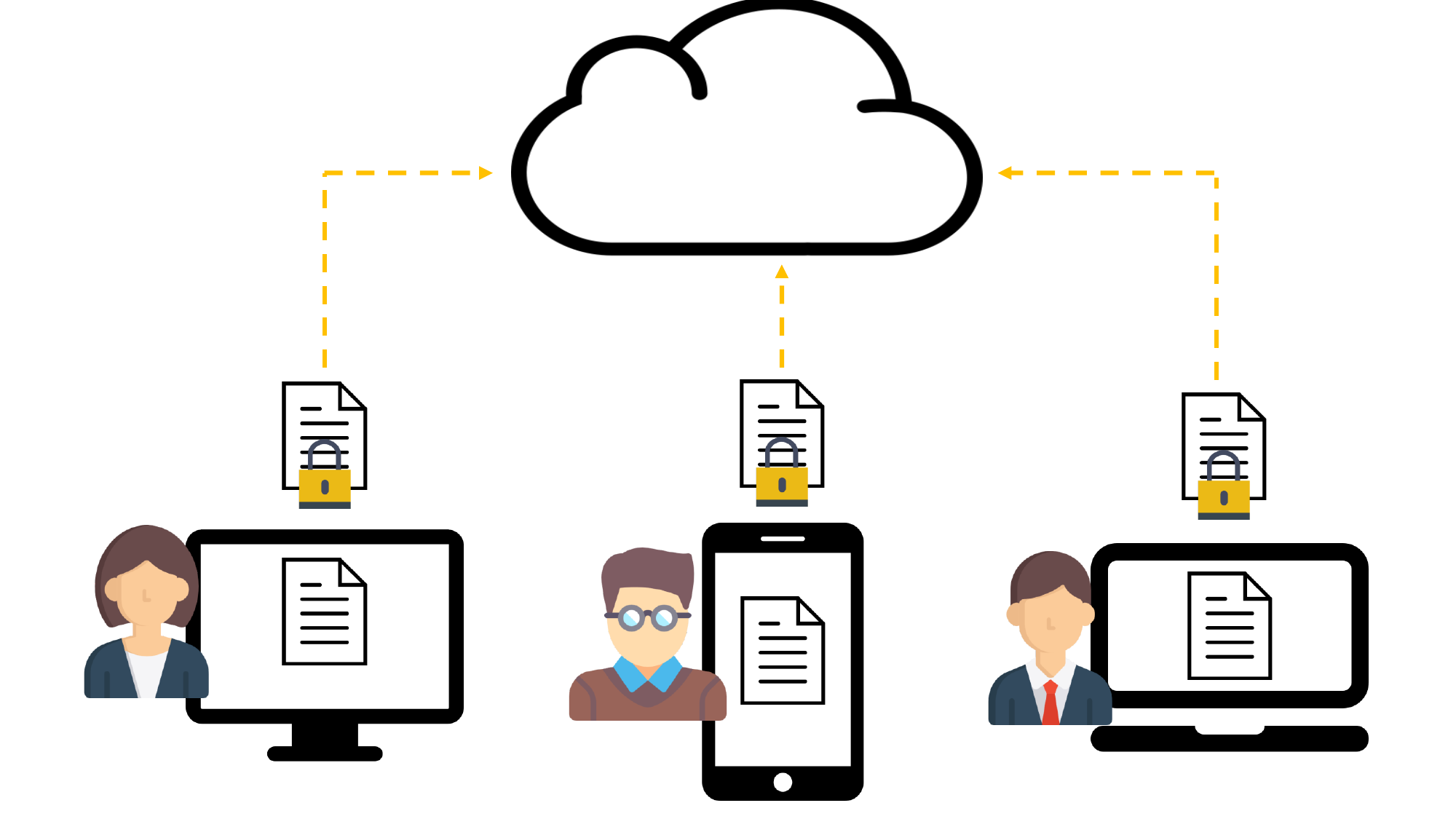}	
		\caption{Users encrypt their files and upload them to the cloud.}
		\label{fig:fileenc}
	\end{figure}
	
	\begin{figure}[h]
		\centering
		\includegraphics[width=0.5\textwidth]{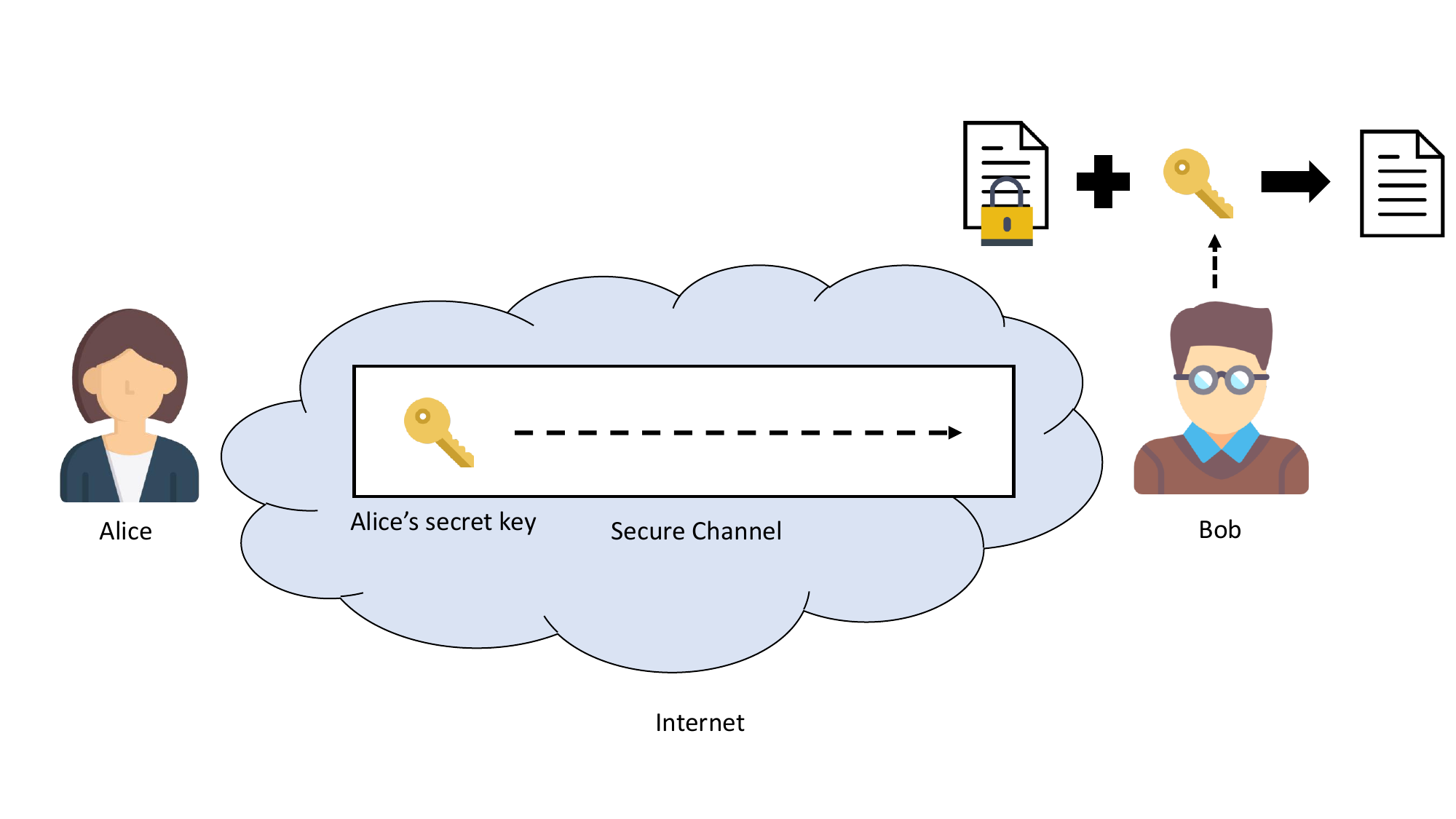}	
		\caption{Alice shares her private key with Bob through a secure channel.}
		\label{fig:secch}
	\end{figure}
	
	\begin{figure}[h]
		\centering
		\includegraphics[width=0.5\textwidth]{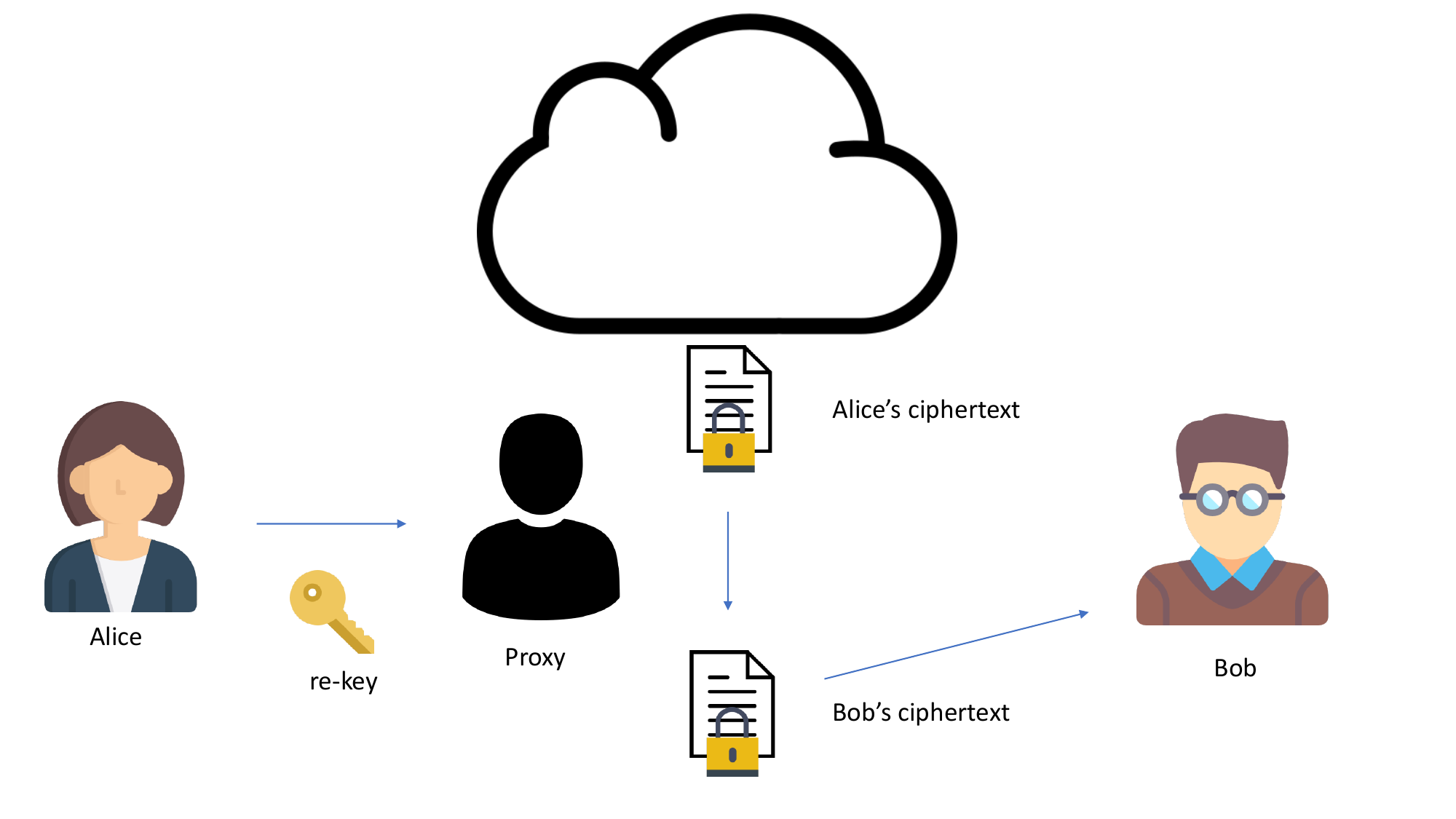}	
		\caption{The cloud manager, the proxy, converts Alice's ciphertext into Bob's via the re-encryption key.}
		\label{fig:pre}
	\end{figure}
	
	\begin{figure}[h]
		\centering
		\includegraphics[width=0.5\textwidth]{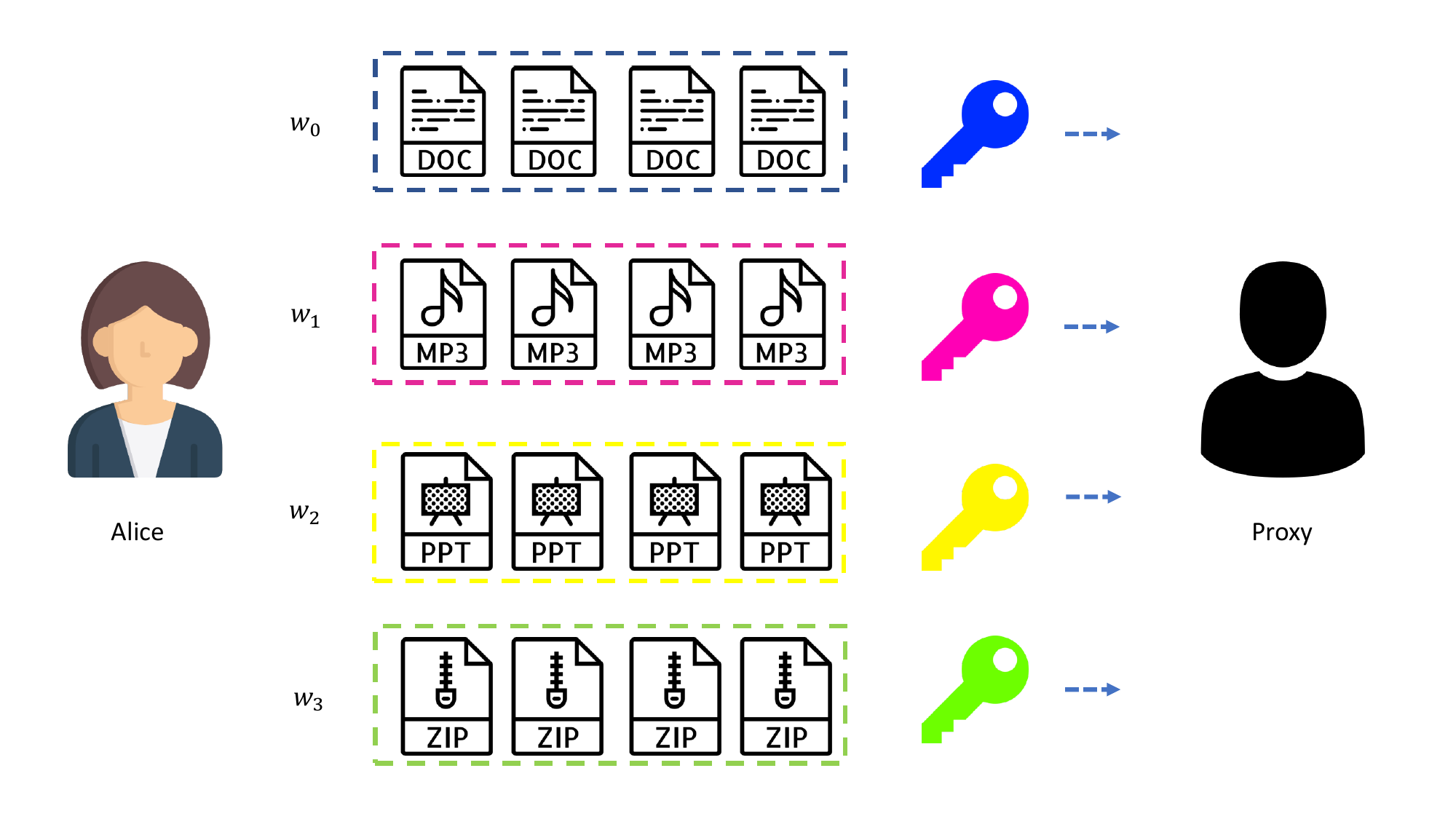}	
		\caption{Alice uses four different conditional keys to employ different access control right on the encrypted files.}
		\label{fig:cpre}
	\end{figure}
	
	\subsection{Contribution}
	\label{sec:contribution}
	In this paper, we further prove that our KAPRE scheme is CCA 
	secure in the adaptive model without using random oracle. 
	In addition, we also adopt the PBC library to implement the 
	proposed scheme in the real world scenario.

	\section{Preliminaries}
	\label{ch:preliminaries}
	Some basic preliminaries and the security definition are 
	reviewed and fomalized in this section.
	\subsection{Bilinear Mapping}
	\label{sec:Bilinear}
	\begin{definition}
		Groups ($\mathbb{G}$, $\mathbb{G}_T$) are two multiplicative cyclic groups of prime order $p$. \\There is a bilinear mapping $e:\mathbb{G} \times \mathbb{G} \to \mathbb{G}_T$ with the following properties.
		\begin{itemize}
			\item{\bf Bilinearity:} $e(g^a,h^b)=e(g,h)^{ab}$ for any $g, h \in \mathbb{G}$ and $a, b \in \mathbb{Z}_p$.
			\item{\bf Non-Degeneracy:} Whenever $g,h \neq 1_{\mathbb{G}}$ , $e(g,h) \neq 1$.
			\item{\bf Computability:} There exists an  algorithm  that can efficiently compute $e(g,h)$ for any $g, h \in \mathbb{G}$.
		\end{itemize}
	\end{definition}
	
	\subsection{3-weak Decisional Bilinear Diffie-Hellman Inversion}
	\label{sec:3-wDBDHI}
	\begin{definition}
		(3-weak Decisional Bilinear Diffie-Hellman Inversion \cite{Libert_uni_CCA_2008}).\\ 
		We define the 3-weak Decisional Bilinear Diffie-Hellman Inversion (3-wDBDHI) problem as follows. Select two groups $\mathbb{G}$ and $\mathbb{G}_T$  with prime order $p$ and choose $(g,g^a,g^{(a^2)},g^{(a^3)},g^b,Q)$ for random $g\in \mathbb{G}$, $a,b \in_R \mathbb{Z}_p$, and $Q \in_R \mathbb{G}_T$, and then distinguish $e(g,g)^{b/a}$ from $Q  \in \mathbb{G}_T$. In \cite{Libert_uni_CCA_2008}, the authors pointed out that 3-weak Decisional Bilinear Diffie-Hellman Inversion problem is equivalent to deciding whether Q is equal to $e(g,g)^{b/a^2}$ given $(g,g^{1/a},g^a,g^{(a^2)},g^b,Q)$. \\ An algorithm $\mathcal{A}$ $(t,\epsilon)$ has advantage at least $\epsilon$ in solving 3-wDBDHI problem if it runs in time $t$ where
		$$
		\begin{array}{c}
		| \Pr [\mathcal{A}(g,g^{1/a},g^a,g^{(a^2)},g^b,e(g,g)^{b/a^2})=1] - \\
		\Pr [\mathcal{A}(g,g^{1/a},g^a,g^{(a^2)},g^b,Q)=1 ] | \geq\epsilon.
		\end{array}
		$$
		\\
		We say that the $(\epsilon, t)$-3-wDBDHI assumption holds in a group $\mathbb{G}$ if no polynomial $t$-time algorithm can solve the 3-wDBDHI problem with non-negligible advantage $\epsilon$.
	\end{definition}
	\subsection{Model and Security Notions of Unidirectional Key-Aggregate PRE}
	\label{sec:Model and Security Notions of Unidirectional Key-Aggregate PRE}
	The section presents the proxy re-encryption model \cite{Libert_uni_CCA_2008,Weng_2010} with some modifications. We only focus on defining a unidirectional, single-use proxy re-encryption scheme. One can simply modify this definition into a bidirectional or multi-use scheme but it is not in the scope of our discussion.
	\subsubsection{Unidirectional Key-Aggregate PRE }
	\label{Unidirectional Key-Aggregate PRE }
	\begin{definition}
		A unidirectional, single-use, proxy re-encryption scheme $\Gamma$ is composed of eight algorithms  $\Gamma$ = (Setup, KeyGen, ReKeyGen, Enc\textsubscript{2}, Enc\textsubscript{1}, ReEnc, Dec\textsubscript{2}, Dec\textsubscript{1}):
		\begin{itemize}
			\item[-] { Setup ($\lambda$)} :\\	The Setup algorithm takes  $\lambda$ as a security parameter and then publishes the public parameters $par$.
			
			\item[-] { KeyGen ($i$)} :\\ On input of user's index $i$, KeyGen generates user $i$'s key pair $(pk_i, sk_i)$ in which $pk_i$ is the public key and $sk_i$ is the secret key.
			
			\item[-] { ReKeyGen ($S,sk_i,pk_j$)} :\\ Given $sk_i$ of user $i$, $pk_j$ of user $j$, ReKeyGen outputs a re-encryption key $rk_{i \xrightarrow[S]{} j}$ where the set $S$ contains the indices of the files that are able to be re-encrypted.	
			
			\item[-] { Enc\textsubscript{2} ($pk_i,m$)} :\\ On input of a message $m$ and $pk_i$ , this algorithm outputs a second level ciphertexts $C$.
			
			\item[-] { Enc\textsubscript{1} ($pk_j,m$)} :\\ On input of a message $m$ and $pk_j$ , this algorithm outputs a first level ciphertexts $C^\prime$.
			
			\item[-] { ReEnc ($S,rk_{i \xrightarrow[S]{} j},C$)} :\\ Given a re-encryption key $rk_{i \xrightarrow[S]{} j}$, ciphertext $C$ for user $i$, and $S$, ReEnc ouputs ciphertext $C^\prime$ for user $j$ if the file type of ciphertext $C$ is in the set $S$.
			
			\item[-] { Dec\textsubscript{2} ($sk_i,C$)} :\\ On input of a secret key $sk_i$ and a second level ciphertext $C$ for user $i$, Dec\textsubscript{2} outputs plaintext $m$ or symbol $\bot$ if the ciphertext is invalid.
			
			\item[-] { Dec\textsubscript{1} ($sk_j,C^\prime$)} :\\ On input of a secret key $sk_j$ and a first level ciphertext $C^\prime$ for user $j$, Dec\textsubscript{1} outputs plaintext $m$ or symbol $\bot$ if the ciphertext is invalid.

		\end{itemize}
		A unidirectional, single-use proxy re-encryption scheme $\Gamma$ is correct if:
		\begin{itemize}
			\item[-]{} For all $(pk,sk) \leftarrow$ { KeyGen($par$)} and $m$, \[\text{Dec}\textsubscript{1}(sk, \text{Enc}\textsubscript{1}(pk,m)) = m.\] \[\text{Dec}\textsubscript{2}(sk, \text{Enc}\textsubscript{2}(pk,m)) = m.\]
			\item[-]{} For all $(pk_i , sk_i), (pk_j , sk_j ), rk_{i \xrightarrow[S]{} j}$, and $m$, \[\text{Dec}\textsubscript{1}(sk_j , ReEnc(S,rk_{i \xrightarrow[S]{} j} , \text{Enc}\textsubscript{2}(pk_i , m))) = m.\]
		\end{itemize} 
		
	\end{definition}
	\subsection{Chosen-Ciphertext Security}
	\label{sec:Chosen-Ciphertext Security}
	
	In this section, we review security notions for unidirectional PRE schemes derived from \cite{Canetti_2007,Libert_uni_CCA_2008,Weng_2010}. We consider the following oracles that model the ability of an adversary $\mathcal{A}$:
	\begin{itemize}
		\item[-] Public key generation oracle $\mathcal{O}_{pk}(i)$:\\ This oracle takes an index $i$ as an input, runs algorithm KeyGen ($par$), and returns $pk_i$ to $\mathcal{A}$. 
		\item[-] Secret key generation oracle $\mathcal{O}_{sk}(pk_i)$:\\ To obtain $(pk_i, sk_i)$, this oracle takes $pk_i$ as an input and returns $sk_i$ with respect to $pk_i$.
		\item[-] Re-encryption key generation oracle  $\mathcal{O}_{rk}(S,pk_i,pk_j)$:\\ Return $rk_{i \xrightarrow[S]{} j}\leftarrow$ ReKeyGen($S,sk_i,pk_j$).
		\item[-] Re-encryption oracle  $\mathcal{O}_{re}(S,pk_i,pk_j,C)$:\\ This oracle returns $C^\prime\leftarrow$ ReEnc($S$, ReKeyGen($S,sk_i,pk_j$),$C$) to $\mathcal{A}$.
		\item[-] First level decryption oracle  $\mathcal{O}_{1d}(pk_j,C^\prime)$:\\ This oracle returns the result of Dec\textsubscript{1}($sk_j,C^\prime$) where $C^\prime$ is a first level ciphertext for user $j$.
		
	\end{itemize}
	
	\subsection{Security of Second Level Ciphertext}
	\label{sec:Security of Second Level Ciphertext}
	Our definition is inspired from \cite{Weng_2010} that uses adaptive corruption model in the security proof. Besides, the security is CCA secure instead of RCCA secure. We denote the following security as IND-2PRE-CCA.
	\begin{definition}
		A polynomial time adversary $\mathcal{A}$'s advantage against the chosen-ciphertext atacks at level 2 (IND-2PRE-CCA) for a single-use unidirectional proxy re-encryption scheme $\Gamma$ is as follows:
		$$ 
		\begin{array}{l}
		Adv^{IND-2PRE-CCA}_{\Gamma,\mathcal{A}}(\lambda)=\\\\
		\left| \Pr \left[ \delta^\prime=\delta \left|\begin{array}{c}
		par \leftarrow \text{Setup}(\lambda);\\(pk_{i^*},(m_0,m_1),st)\leftarrow\\
		\mathcal{A}_{phase1}^{\mathcal{O}_{pk},\mathcal{O}_{sk},\mathcal{O}_{rk},\mathcal{O}_{re},\mathcal{O}_{1d}}(par);\\
		\delta \xleftarrow{\$}\{0,1\};C^* \leftarrow \text{Enc}\textsubscript{2}(pk_{i^*},m_\delta); \\
		\delta^\prime \leftarrow \mathcal{A}_{phase2}^{\mathcal{O}_{pk},\mathcal{O}_{sk},\mathcal{O}_{rk},\mathcal{O}_{re},\mathcal{O}_{1d}}(par,C^*,st);\\
		\end{array}\right.
		\right] - \frac{1}{2} \right|
		\end{array}
		$$
		in which $st$ is the internal state of the adversary $\mathcal{A}$. The following rules should be satisfied: (i). $\mathcal{A}$ cannot ask for $\mathcal{O}_{sk}(pk_{i^*})$; \enspace (ii). $\mathcal{A}$ cannot ask for $\mathcal{O}_{sk}(pk_j)$ and $\mathcal{O}_{rk}(pk_{i^*},pk_j)$ at the same time; \enspace  (iii). $\mathcal{A}$ cannot ask for $\mathcal{O}_{sk}(pk_j)$ and  $\mathcal{O}_{re}(pk_{i^*},pk_j,C^*)$ at the same time; \enspace  (iv). For a first level ciphertext $C^{\prime*}$ generated by $\mathcal{O}_{re}(pk_{i^*},pk_j,C^*)$, $\mathcal{A}$ cannot ask for $\mathcal{O}_{1d}(pk_j,C^{\prime*})$.  We denote the above adversary $\mathcal{A}$ as an IND-2PRE-CCA adversary. We say that a single-use unidirectional proxy re-encryption scheme $\Gamma$ is $(t,q_{pk},q_{sk},q_{rk},q_{re},q_{1d},\epsilon)$-IND-2PRE-CCA secure if there is no $t$-time IND-2PRE-CCA adversary $\mathcal{A}$ such that $ Adv^{IND-2PRE-CCA}_{\Gamma,\mathcal{A}}(\lambda)\geq \epsilon$, where $q_{pk},q_{sk},q_{rk},q_{re},q_{1d}$ are the numbers of queries to oracle $\mathcal{O}_{pk},\mathcal{O}_{sk},\mathcal{O}_{rk},\mathcal{O}_{re},\mathcal{O}_{1d}$ respectively.\\
		\textit{Remark.}\enspace In \cite{Libert_uni_CCA_2008}, the author pointed out that providing $\mathcal{A}$ with a second level decryption oracle is meaningless, since (i). $\mathcal{A}$ is not allowed to issue a second level decryption query to decrypt the challenge ciphertext $C^*$; \enspace (ii). For any other second level ciphertext, $\mathcal{A}$ can first ask $\mathcal{O}_{re}$ to re-encrypt it into a first level ciphertext, and then ask $\mathcal{O}_{1d}$ to decrypt it.
	\end{definition} 
	\subsection{Security of First Level Ciphertext}
	\label{sec:Security of First Level Ciphertext}
	In this definition, we review the security of first level ciphertext in \cite{Weng_2010} where $\mathcal{A}$ is provided with first level challenge ciphertexts in the challenge phase. Note that the first level ciphertext cannot be re-encrypted further. Therefore, $\mathcal{A}$ can obtain any re-encryption key including those from the target public key $pk_{i^*}$ to other public keys which are produced by $\mathcal{O}_{pk}$. Besides, as $\mathcal{A}$ is allowed to obtain any re-encryption keys, the challenger does not have to provide the re-encryption oracle $\mathcal{O}_{re}$ for $\mathcal{A}$. As aforesaid remark, providing $\mathcal{A}$ with a second level decryption oracle is meaningless in the IND-1PRE-CCA definition.
	\begin{definition}
		A polynomial time adversary $\mathcal{A}$'s advantage against the chosen-ciphertext atacks at level 1 (IND-1PRE-CCA) for a single-use unidirectional proxy re-encryption scheme $\Gamma$ is as follows:
		$$
		\begin{array}{l}
		Adv^{IND-1PRE-CCA}_{\Gamma,\mathcal{A}}(\lambda)=\\\\
		\left| \Pr \left[ \delta^\prime=\delta \left|
		\begin{array}{c}
		par \leftarrow \text{Setup}(\lambda);\\(pk_{i^*},(m_0,m_1),st)\leftarrow\\
		\mathcal{A}_{phase1}^{\mathcal{O}_{pk},\mathcal{O}_{sk},\mathcal{O}_{rk},\mathcal{O}_{1d}}(par);\\
		\delta \xleftarrow{\$}\{0,1\};C^{\prime*} \leftarrow \text{Enc}\textsubscript{1}(pk_{i^*},m_\delta); \\
		\delta^\prime \leftarrow \mathcal{A}_{phase2}^{\mathcal{O}_{pk},\mathcal{O}_{sk},\mathcal{O}_{rk},\mathcal{O}_{1d}}(par,C^{\prime*},st);\\
		\end{array}\right.
		\right] - \frac{1}{2} \right|
		\end{array}
		$$
		where $st$ is the internal state information of adversary $\mathcal{A}$. Note that $\mathcal{A}$ cannot issue either $\mathcal{O}_{sk}(pk_{i^*})$ or $\mathcal{O}_{1d}(pk_{i^*},C^{\prime*})$. We denote the above adversary $\mathcal{A}$ as an IND-1PRE-CCA adversary. We say that a single-use unidirectional proxy re-encryption scheme $\Gamma$ is $(t,q_{pk},q_{sk},q_{rk},q_{1d},\epsilon)$-IND-1PRE-CCA secure if there is no $t$-time IND-1PRE-CCA adversary $\mathcal{A}$ such that $ Adv^{IND-1PRE-CCA}_{\Gamma,\mathcal{A}}(\lambda)\geq \epsilon$, where $q_{pk},q_{sk},q_{rk},q_{1d}$ are the numbers of queries to oracle $\mathcal{O}_{pk},\mathcal{O}_{sk},\mathcal{O}_{rk},\mathcal{O}_{1d}$ respectively.\\
	\end{definition}
	\subsection{Master Secret Security}
	\label{sec:Master Secret Security}
	In \cite{Ateniese_2006}, the authors defined master secret security for unidirectional proxy re-encryption schemes. This security notion describes that even if all the delegatees collude together, it is still difficult for them to compute the delegator's secret key. Followings is the security notion of MSS-PRE:
	\begin{definition}
		An adversary $\mathcal{A}$'s advantage against master secret security for a unidirectional proxy re-encryption scheme $\Gamma$ is defined as:
		$$
		\\
		\begin{array}{l}
		Adv_{\Gamma,\mathcal{A}}^{MSS-PRE}(\lambda)=\Pr[sk_{i^*} \text{ is a valid secret key}| par \leftarrow \\ \text{Setup}( \lambda );sk_{i^*} \leftarrow \mathcal{A}^{\mathcal{O}_{pk},\mathcal{O}_{sk},\mathcal{O}_{rk}}(par) ]
		\end{array}
		\\
		$$  
		in which the public key $pk_{i^*}$ is produced by $\mathcal{O}_{pk}$, and $\mathcal{A}$ cannot ask for $\mathcal{O}_{sk}(pk_{i^*})$. The above adversary is an MSS-PRE adversary. We say that a unidirectional proxy re-encryption scheme $\Gamma$ is $(t,q_{pk},q_{sk},q_{rk},\epsilon)$-MSS-PRE secure if there is no $t$-time MSS-PRE adversary $\mathcal{A}$ such that $ Adv^{MSS-PRE}_{\Gamma,\mathcal{A}}(\lambda)\geq \epsilon$, where $q_{pk},q_{sk},q_{rk}$ are the numbers of queries to oracle $\mathcal{O}_{pk},\mathcal{O}_{sk},\mathcal{O}_{rk}$ respectively.
	\end{definition}
	\begin{theorem}
		\label{MSS}
		Suppose that there is an adversary $\mathcal{A}$ who has the ability to break the MSS-PRE security of
		a single-use unidirectional proxy re-encryption scheme $\Gamma$, then we can build another adversary $\mathcal{B}$ who can break IND-1PRE-CCA security.
	\end{theorem}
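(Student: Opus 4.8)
The plan is to give a direct reduction turning any MSS-PRE adversary $\mathcal{A}$ into an IND-1PRE-CCA adversary $\mathcal{B}$, resting on the simple observation that recovering the target's secret key is at least as powerful as distinguishing its ciphertexts: once $\mathcal{B}$ holds a valid $sk_{i^*}$, it can decrypt the first-level challenge ciphertext outright and win the indistinguishability game with certainty. Concretely, $\mathcal{B}$ receives $par$ from its own IND-1PRE-CCA challenger, forwards it to $\mathcal{A}$, and runs the entirety of $\mathcal{A}$'s attack inside its own phase~1. Every query $\mathcal{A}$ makes to $\mathcal{O}_{pk}$, $\mathcal{O}_{sk}$, or $\mathcal{O}_{rk}$ is answered by relaying it to $\mathcal{B}$'s corresponding oracle and passing back the reply. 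Since the MSS-PRE adversary is granted only these three oracles and no decryption oracle, this simulation is perfect and needs no bookkeeping beyond pass-through.

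First I would verify that the simulation never violates an IND-1PRE-CCA restriction. The MSS-PRE rule forbidding $\mathcal{O}_{sk}(pk_{i^*})$ coincides exactly with the IND-1PRE-CCA rule, so forwarding $\mathcal{A}$'s secret-key queries is always legal. The IND-1PRE-CCA model explicitly permits the adversary to obtain every re-encryption key, including those originating from the target $pk_{i^*}$, so all of $\mathcal{A}$'s $\mathcal{O}_{rk}$ queries may be relayed without restriction. The remaining IND-1PRE-CCA constraint, that $\mathcal{A}$ must not query $\mathcal{O}_{1d}(pk_{i^*},C^{\prime*})$, is satisfied vacuously, because $\mathcal{B}$ issues no first-level decryption query at all; it will decrypt using the recovered key rather than the oracle.

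When $\mathcal{A}$ halts and outputs a candidate secret key $sk_{i^*}$ for its target $pk_{i^*}$, $\mathcal{B}$ ends phase~1 by submitting that same $pk_{i^*}$ together with two arbitrary distinct messages $(m_0,m_1)$, stashing $sk_{i^*}$ in its state $st$. On receiving the challenge $C^{\prime*}=\text{Enc}_1(pk_{i^*},m_\delta)$, $\mathcal{B}$ computes $m\leftarrow\text{Dec}_1(sk_{i^*},C^{\prime*})$ in phase~2 and outputs $\delta^\prime=0$ if $m=m_0$ and $\delta^\prime=1$ otherwise. If $\mathcal{A}$'s output was valid, an event of probability $Adv^{MSS-PRE}_{\Gamma,\mathcal{A}}(\lambda)$, call it $\epsilon$, then correctness of $\text{Dec}_1$ forces $m=m_\delta$ and hence $\delta^\prime=\delta$; otherwise $\mathcal{B}$'s guess is independent of $\delta$ and succeeds with probability $1/2$. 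A short calculation then gives $\Pr[\delta^\prime=\delta]=\epsilon+(1-\epsilon)/2=1/2+\epsilon/2$, so $Adv^{IND-1PRE-CCA}_{\Gamma,\mathcal{B}}(\lambda)\geq\epsilon/2$, and a non-negligible MSS-PRE advantage yields a non-negligible IND-1PRE-CCA advantage.

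The reduction is essentially routine, and the only point needing care is the phase structure: $\mathcal{B}$ must commit to its challenge target before seeing the challenge ciphertext, yet it only learns $\mathcal{A}$'s target when $\mathcal{A}$ terminates. I resolve this by the deliberate choice to run all of $\mathcal{A}$ within $\mathcal{B}$'s phase~1, so that both $pk_{i^*}$ and $sk_{i^*}$ are in hand at the instant $\mathcal{B}$ must declare its target and messages. If one prefers to read the MSS-PRE game as leaving the target only implicitly determined, $\mathcal{B}$ may instead guess the target among the $q_{pk}$ generated keys at the outset, at the cost of an extra $1/q_{pk}$ factor in the advantage; I expect the intended reading makes the target explicit, so no such loss is necessary.
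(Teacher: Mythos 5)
The paper offers no proof to compare against: it declares the implication ``intuitive'' and omits the argument entirely, so what you have written is a fleshed-out version of exactly the reduction the authors had in mind. Your overall structure is sound: running all of $\mathcal{A}$ inside $\mathcal{B}$'s phase~1 resolves the target-commitment issue; the MSS-PRE oracles $\mathcal{O}_{pk},\mathcal{O}_{sk},\mathcal{O}_{rk}$ are a subset of the IND-1PRE-CCA oracles; the two games' prohibitions on $\mathcal{O}_{sk}(pk_{i^*})$ coincide; first-level security places no restriction on re-encryption-key queries involving the target; and $\mathcal{B}$ never needs $\mathcal{O}_{1d}$.

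There is, however, one step that does not hold as you stated it: the claim that when $\mathcal{A}$'s output key is invalid, $\mathcal{B}$'s guess ``is independent of $\delta$ and succeeds with probability $1/2$.'' In that branch $\delta^{\prime}$ is a function of $\mathrm{Dec}_1(sk_{i^*},\mathrm{Enc}_1(pk_{i^*},m_\delta))$, a quantity that \emph{does} depend on $\delta$, and for a generic scheme $\Gamma$ nothing rules out that decryption under a malformed key is anti-correlated with the encrypted message. In the worst case $\Pr[\delta^{\prime}=\delta]=\epsilon+(1-\epsilon)q$ with $q<1/2$, and taking $q=(1/2-\epsilon)/(1-\epsilon)$ cancels the advantage entirely. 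The fix is cheap: have $\mathcal{B}$ test the candidate key before using it, and output a fresh uniform bit whenever the test fails, so the invalid branch contributes exactly $1/2$ by construction and your computation $\Pr[\delta^{\prime}=\delta]=1/2+\epsilon/2$ goes through. For the scheme at hand the test is immediate, since a valid $sk_{i^*}=(a_{i^*,1},a_{i^*,2},a_{i^*,3})$ is precisely one that regenerates the public key: $g^{a_{i^*,1}}=pk_{i^*,1}$, $g^{a_{i^*,2}}=pk_{i^*,2}$, and $g^{a_{i^*,3}^{\rho}}=g_\rho$ for each $\rho$. The same test also removes the $1/q_{pk}$ loss you mention at the end: because $\mathcal{A}$ finishes inside phase~1, $\mathcal{B}$ can check the returned key against every public key issued by $\mathcal{O}_{pk}$ and commit to the uncorrupted one it matches, rather than guessing a target in advance.
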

	\noindent That is, for a single-use unidirectional proxy re-encryption scheme, the master secret security is implied by the first level ciphertext security. The proof for Theorem \ref{MSS} is intuitive, and thus the proof is omitted here.

	\section{Our Construction}
	\label{ch:construction}
	In this section, we propose a key-aggregate proxy re-encryption scheme which achieves fine-grained access control on the shared files. Moreover, the size of a re-encryption key is constant. The scheme is IND-PRE-CCA secure based on the 3-wDBDHI assumption in an adaptive model without using random oracles.
	\subsection{Notations}
	The notations used here are defined in TABLE~\ref{tab:notation2}.\\
	\begin{table}[tbh]
		\centering
		\caption{The Notations}
		\begin{tabular}{|cl|}
			\hline
			Notation & Meaning\\
			\hline
			$\mathbb{G}$							&	a cyclic multiplicative group of prime order $p$\\
			$\mathbb{G}_T$							&	a cyclic multiplicative group of prime order $p$\\
			$e$							&	a bilinear mapping; $e:{\mathbb{G}}\times{\mathbb{G}}\to{\mathbb{G}_T}$\\
			$par$						&   the public parameters \\
			$n$						&	the number of total types\\ 
			\hline
		\end{tabular}
		\label{tab:notation2}
	\end{table}
	\subsection{The Proposed Scheme}
	\label{sec:scheme}
	The proposed scheme, noted by $\Gamma$, consists of eight algorithms, i.e. Setup, KeyGen, ReKeyGen, Enc\textsubscript{2}, Enc\textsubscript{1}, ReEnc, Dec\textsubscript{2}, and Dec\textsubscript{1}, which will be defined in the following subsections.
	\subsubsection{Setup ($\lambda$)}
	\label{subsec:setup}
	This algorithm outputs the parameters $par=\langle p,g,d,u,v,w,e,\mathbb{G},\mathbb{G}_T,Z,H,F,l_1,l \rangle$ upon an input security parameter $\lambda$. Each component of $par$ is set as follows. $\mathbb{G}, \mathbb{G}_T$ are two cyclic multiplicative groups of prime order $p$ where  
	$2^{\lambda}\leq p \leq 2^{\lambda+1}$ and $g,d,u,v,w$ are generators of $\mathbb{G}$. Let $H:\mathbb{G}\times\mathbb{G}\rightarrow\mathbb{Z}_p^*, H_1:\mathbb{G}_T \times \mathbb{G} \rightarrow\mathbb{Z}_p^* $ be two collision resistant hash functions and $Z=e(g,g)$. 
	\subsubsection{KeyGen ($i$)}
	\label{subsec:KeyGen}
	A user $i$'s key pair is of the form 
	$pk_i=(g^{a_{i,1}},g^{a_{i,2}},\Delta_i)$, $sk_i=(a_{i,1},a_{i,2},a_{i,3})$, where $a_{i,1},a_{i,2},a_{i,3} \in_R \mathbb{Z}_p$, $\Delta_i=\langle g_1,g_2,\dots,g_n,g_{n+2},\dots,g_{2n} \rangle$, where $g_\rho = g^{a_{i,3}^\rho}$ for each $\rho \in \{1, ..., 2n\}/\{n+1\}$, and $n$ represents the number of the file types specified by user $i$. 
	\subsubsection{ReKeyGen ($S,sk_i,pk_j$)}
	\label{subsec:ReKeyGen}
	For the set $S$ of user $i$'s file types that are able to be re-encrypted, user $i$ (a delegator) with $sk_i=(a_{i,1},a_{i,2},a_{i,3})$ can delegate decryption right to a delegatee, say user $j$, with $pk_j=(g^{a_{j,1}},g^{a_{j,2}},\Delta_j)$ by computing $rk_{i \xrightarrow[S]{} j}$ as: 
	
	$$
	\begin{array}{l}
	rk_{i \xrightarrow[S]{} j}=((g^{a_{j,1}})^{1/a_{i,1}}),(\prod_{\nu\in S}g_{n+1-\nu})^{a_{i,2}})= \\ (g^{a_{j,1}/ a_{i,1} },\prod_{\nu\in S}g_{n+1-\nu}^{a_{i,2}}).
	\end{array}
	$$
	\\
	Suppose that we have totally 10 different kinds of file types, we have: 
	$$
	n + 1 = 10 + 1 = 11
	$$
	When Alice wants to share file types $S = \left\lbrace 1, 3, 5, 8, 9\right\rbrace $ with Bob, she can generate a re-encryption key $rk_{i \xrightarrow[S]{} j} = \left\langle r_1, r_2 \right\rangle $ as follows:
	$$
	r_1 = \left( g^{a_{j,1}}\right) ^{1/a_{i,1}},  r_2 =\left( g_{10} \cdot g_{8} \cdot g_{6} \cdot g_{3} \cdot g_{2} \right)^{a_{i,2}} 
	$$
	\subsubsection{Enc\textsubscript{2} ($pk_i,m$)}
	\label{subsec:Enc}
	Given $m \in \mathbb{G}$ and user $i$'s public key $pk_i=(g^{a_{i,1}},g^{a_{i,2}},\Delta_i)$, where $m$'s type is  $\rho \in \{1, ..., n\}$, select $t,k,r,\eta \in_R \mathbb{Z}_p $ and compute the ciphertext as:
	$$
	\begin{array}{l}
	C = (k,c_1,c_2,c_3,c_4,c_5,c_6,c_7,c_8,c_9) \\ = (k, d^{r}, g^{a_{i,1}r},g^t,(g^{a_{i,2}}\cdot g_\rho)^t, H_1(K,d^r) \oplus m \oplus H_1\\ (e(g_1,g_n)^t, d^r),
	(u^h\cdot v^k\cdot w)^r,(u^{h^\prime}\cdot v^k\cdot w)^r,g^\eta, H_1(K,g^\eta))
	\end{array}
	$$
	
	\noindent where $h=H(c_1,c_5),h^\prime=H(c_1, H_1(K,c_1)\oplus m), K=Z^r$. Finally, this algorithm outputs the ciphertext: 
	$$
	\begin{array}{l}C=(k,c_1,c_2,c_3,c_4,c_5,c_6,c_7,c_8,c_9)\end{array}
	$$.
	Note that, the validity of a second level ciphertext can be verified by checking whether the following formulas hold:

	\[
	e(c_1,u^h\cdot v^k \cdot w)=e(c_6,d)\label{eq1}\tag{Eq. 1} 
	\]
	\[
	e(c_1,g^{a_1})=e(c_2,d)\label{eq2}\tag{Eq. 2} 
	\]
	\[
	e(c_3,g^{a_{i,2}}\cdot g_\rho)=e(c_4,g)\label{eq3}\tag{Eq. 3} 
	\]
	
	\noindent If true, it is accepted as a valid ciphertext.
	
	\subsubsection{Enc\textsubscript{1} ($pk_j,m$)}
	\label{subsec:Enc}
	Given $m \in \mathbb{G}$ and user $j$'s public key $pk_j=(g^{a_{j,1}},g^{a_{j,2}},\Delta_j)$, select $k,r,\eta \in_R \mathbb{Z}_p $ and compute the ciphertext as:
	$$
	\begin{array}{l}
	C^\prime = (k, c_1^\prime, c_2^\prime, c_3^\prime, c_4^\prime, c_5^\prime, c_6^\prime) = (k, d^{r}, e(g^{a_{j,1}},g)^r, H_1(K,d^r)\oplus \\ m, (u^h\cdot v^k\cdot w)^r,g^\eta, H_1(K,g^\eta))
	\end{array}
	$$
	\noindent where $h=H(c_1^\prime,c_3^\prime),$ and $K=Z^r$. Finally, output the ciphertext $C^\prime=(k,c_1^\prime,c_2^\prime,c_3^\prime,c_4^\prime, c_5^\prime, c_6^\prime)$.\\
	Note that the validity of $k, c_1^\prime, c_3^\prime, c_4^\prime$ can be verified by checking:
	\[
	e(c_1^\prime,u^h\cdot v^k \cdot w)=e(c_4^\prime,d)\label{eq4}\tag{Eq. 4} 
	\]
	The validity of $c_2^\prime$ can be verified as that in \ref{eq6}.  If true, it is accepted as a valid ciphertext.

	\subsubsection{ReEnc ($S,rk_{i \xrightarrow[S]{} j},C$)}
	\label{subsec:ReEnc}
	To generate a first level ciphertext $C^\prime= (k,c_1^\prime,c_2^\prime,c_3^\prime,c_4^\prime,c_5^\prime,c_6^\prime)$ of user $j$ with $rk_{i \xrightarrow[S]{} j}=( r_1, r_2)$ and $C=(k,c_1, c_2, c_3, c_4, c_5, c_6, c_7, c_8, c_9)$ of user $i$,
	the proxy first checks if $C$'s type $\rho \in S$, and then checks \ref{eq1}, \ref{eq2}, and \ref{eq3}. If true, the proxy perfomes the following steps; otherwise, outputs $\perp$.
	\begin{enumerate}
		\item Set $c_1^\prime=c_1$, $c_4^\prime=c_7$, $c_5^\prime=c_8$, $c_6^\prime = c_9$.
		\item Compute 
		$c_2^\prime=e(c_2,r_1)$.
		\item Compute 
		$c_3^\prime=c_5 \oplus H_1(\frac{e(\prod_{\nu\in S}g_{n+1-\nu },c_4)}{e(r_2\cdot \prod_{\nu\in S,\nu\neq \rho}g_{n+1-\nu +\rho},c_3)},c_1)$.
		\item Send 
		$C^\prime=(k,c_1^\prime,c_2^\prime,c_3^\prime,c_4^\prime,c_5^\prime,c_6^\prime)$ to user $j$.
	\end{enumerate}
	Note that, for ciphertext $C$, the verification of \ref{eq1} and \ref{eq2} can be alternately done by picking $d_1,d_2\in_R \mathbb{Z}_p^*$ and testing if
	
	\[
	e(c_1,(g^{a_1})^{d_1}\cdot (u^h \cdot v^k \cdot w)^{d_2})=e(c_2^{d_1}\cdot c_6^{d_2},d)\label{eq5}\tag{Eq. 5}
	\]
	
	\subsubsection{Dec\textsubscript{2} ($sk_i,C$)}
	\label{subsec:Dec}
	To decrypt a second level ciphertext $C=(k,c_1, c_2, c_3, c_4, c_5, c_6, c_7, c_8, c_9)$ of user $i$, one performs the following steps:
	
	\begin{enumerate}
		\item Check the validity of the ciphertext via \ref{eq3} and \ref{eq5}. If the verification does not succeed, output $\perp$.
		\item With $sk_i=(a_{i,1},a_{i,2},a_{i,3})$, compute $K=e(c_2,g)^{1/a_{i,1}}$. If $H_1(K,c_8)= c_9$ holds, output $m=H_1(K,c_1)\oplus c_5\oplus H_1(e(c_3,g_{n+1}),c_1)$; otherwise, output $\perp$.
	\end{enumerate}
	
	\subsubsection{Dec\textsubscript{1} ($sk_j,C^\prime$)}
	\label{subsec:Dec}
	To decrypt a first level ciphertext $C^\prime=(k,c_1^\prime, c_2^\prime, c_3^\prime, c_4^\prime, c_5^\prime,c_6^\prime)$ of user $j$, one performs the following steps:
	\begin{enumerate}
		\item Check the validity of the ciphertext as in \ref{eq4}. If the verification does not succeed, output $\perp$.
		\item With  $sk_j=(a_{j,1},a_{j,2},a_{j,3})$, compute $K=(c_2^\prime )^{1/a_{j,1}}$. Output $m=H_1(K,c_1^\prime)\oplus c_3^\prime$ if the following formula holds:
		\[
		H_1(K,c_5^\prime)=c_6^\prime\label{eq6}\tag{Eq. 6}
		\]
		Otherwise, output $\perp$.
	\end{enumerate}
	
	\subsection{Correctness}
	\label{sec:Correct}
	We demonstrate the correctness of the proposed scheme as follows:
	\begin{enumerate}
		
		\item The correctness of the re-encryption, from user $i$'s ciphertext to user $j$'s ciphertext, is shown as below:\\
		$
		\begin{array}{ll}
		
		c_1^\prime = c_1 = d^r.

		\end{array}		 
		$
		\\
		$
		\begin{array}{ll}
		
		c_2^\prime 
		&= e(c_2,r_1)\\
		& = e(g^{a_{i,1}r},g^{a_{j,1}/a_{i,1}}) \\
		& = e(g,g)^{a_{j,1}r}.\\
		
		\end{array}		 
		$
		\\
		$
		\begin{array}{ll}
		
		c_3^\prime 
		& = c_5\oplus H_1(\frac{e(\prod_{\nu\in S}g_{n+1-\nu},c_4)}{e(r_2\cdot \prod_{\nu\in S,\nu\neq \rho}g_{n+1-\nu+\rho},c_3)},c_1)\\
		& = c_5\oplus H_1(\frac{e(\prod_{\nu\in S}g_{n+1-\nu},(g^{a_{i,2}}\cdot g_\rho)^t)}{e(\prod_{\nu\in S}g_{n+1-\nu }^{a_{i,2}}\cdot \prod_{\nu\in S,\nu\neq \rho}g_{n+1-\nu+\rho},g^t)},c_1)\\			
		
		& = c_5\oplus H_1(\frac{e(\prod_{\nu\in S}g_{n+1-\nu},g_\rho^t)}{e(\prod_{\nu\in S,\nu\neq \rho}g_{n+1-\nu+\rho},g^t)},c_1)\\
		
		& = c_5\oplus H_1(\frac{e(\prod_{\nu\in S}g_{n+1-\nu+\rho},g^t)}{e(\prod_{\nu\in S}g_{n+1-\nu+\rho},g^t)/e(g_{n+1},g^t)},c_1)\\		
		
		& = c_5\oplus H_1(e(g_{n+1},g^t),c_1))\\		
		& = H_1(K,c_1)\oplus m.\\
		
		\end{array}		 
		$
		\\
		$
		\begin{array}{ll}
		
		c_4^\prime
		& = c_7 \\
		& = (u^{h^\prime} \cdot v^k \cdot w )^r \\
		& = (u^{H(c_1,H_1(K,c_1)\oplus m)} \cdot v^k \cdot w )^r.
		
		\end{array}		 
		$\\
		$
		\begin{array}{ll}
		
		c_5^\prime &=c_8=g^\eta .\\
		
		\end{array}		 
		$\\
		$
		\begin{array}{ll}
		
		c_6^\prime &=c_9=H_1(K,g^\eta) .\\

		\end{array}		 
		$\\

		\item The correctness of the decryption on a first level ciphertext is demonstrated as below:
		\[
		\begin{array}{cl}
		&K = (c_2^\prime )^{1/a_{j,1}}=e(g^{a_{j,1}},g)^{r/a_{j,1}}=e(g,g)^r,\\
		&H_1(K,c_1^\prime)\oplus c_3^\prime=H_1(K,c_1^\prime) \oplus H_1(K,c_1^\prime)\oplus m\\
		& = m.
		\end{array}
		\]
		where $sk_i = (a_{j,1},a_{j,2},a_{j,3})$.
		\item The correctness of the decryption on a second level ciphertext is as follows:
		\[
		\begin{array}{l}
		K = e(c_2,g)^{1/a_{i,1}}=e(g^{a_{i,1}r},g)^{1/a_{i,1}}=e(g,g)^r,\\
		H_1(K,c_1)\oplus c_5\oplus H_1(e(c_3,g_{n+1}),c_1)= \\ H_1(K,c_1)\oplus H_1(e(g_1,g_n)^t,c_1) \oplus c_5=\\
		H_1(K,c_1)\oplus H_1(K,c_1)\oplus H_1(e(g_1,g_n)^t,c_1)\oplus\\ H_1(e(g_1,g_n)^t,c_1)\oplus m
		= m.
		\end{array} 
		\]
		where $sk_i = (a_{i,1},a_{i,2},a_{i,3})$.
	\end{enumerate}
	\section{Security Proof}
	\label{ch:proof}
	In this section, we prove that our scheme is IND-2PRE-CCA secure at the second level and IND-1PRE-CCA secure at the first level assuming that $H$ and $H_1$ are two collision resistant hash functions, and the 3-wDBDHI assumption holds. Our proof is adapted from Weng's CCA secure proxy re-encryption scheme \cite{Weng_2010}.
	\subsection{Security of a Second Level Ciphertext}
	\label{subsec:second level ciphertext}
	\begin{theorem}
		\label{th5}
		If there is an adversary $\mathcal{A}$ who is able to break the $(t,q_{pk},q_{sk},q_{rk},q_{re},q_{d},\epsilon)$-IND-2PRE-CCA security, then we can build another algorithm $\mathcal{B}$ that can break the $(t^\prime,\epsilon^\prime)$-3-wDBDHI assumption in ($\mathbb{G},\mathbb{G}_T$) with
		$$
		\begin{array}{l}
		\epsilon^\prime \geq \frac{\epsilon}{2\dot{e}(1+q_{sk}+q_{rk})}-\frac{q_d+q_{re}}{p}-Adv^{HASH}_{H,\mathcal{A}}-Adv^{HASH}_{H_1,\mathcal{A}}, \\ t^\prime\leq t+\mathcal{O}(\tau (q_{pk}+q_{rk}+q_{re}+q_{d}))
		\end{array}
		$$
		in which $\tau$ represents the maximum computation time to calculate an exponentiation, a multi-exponentiation and a paring, and $\dot{e}$ represents the base of the natural logarithm. The scenario of the proof is depicted in Fig \ref{fig:proof}.
		
		\begin{figure}[h]
			\centering
			\includegraphics[width=0.5\textwidth]{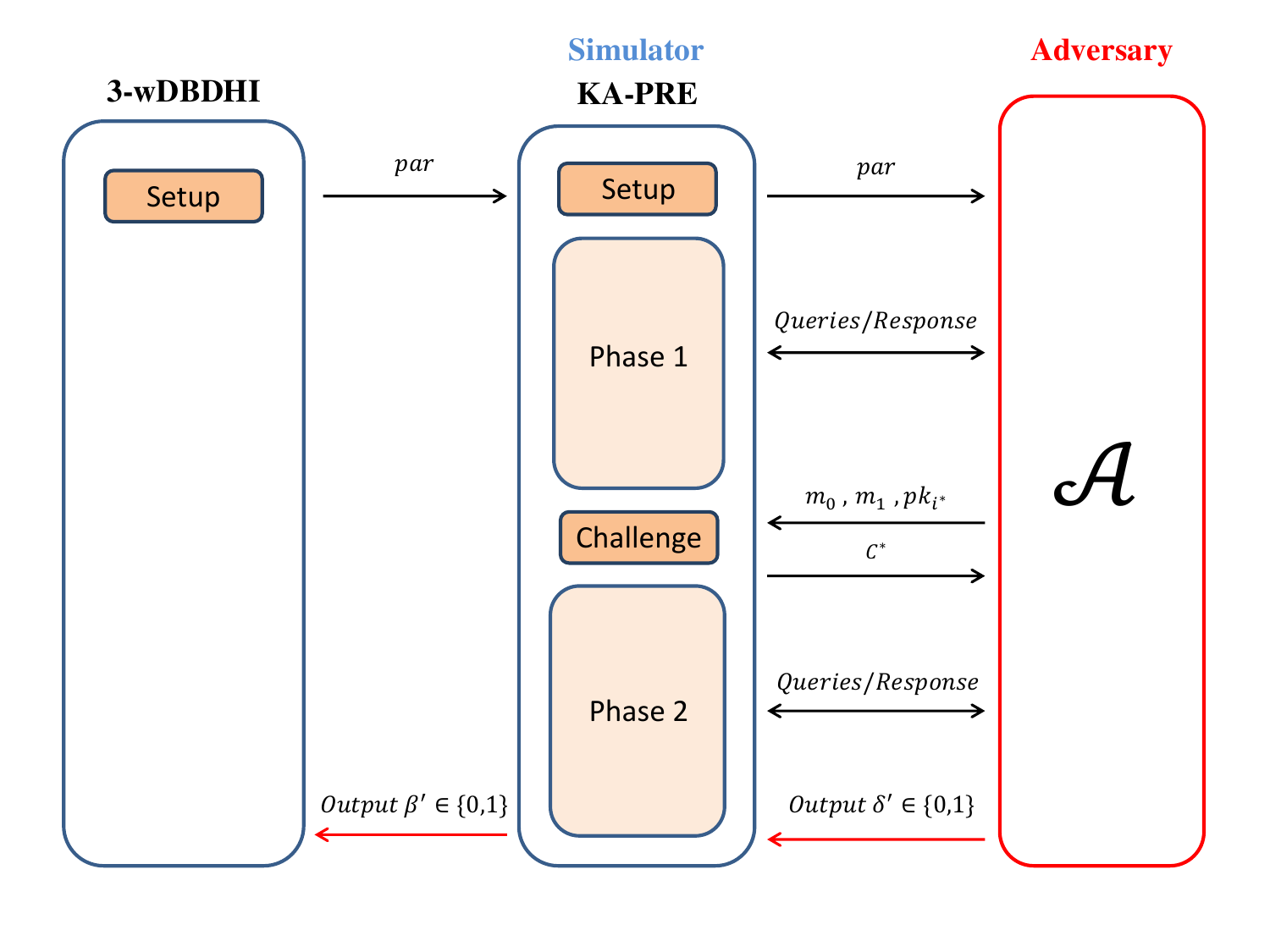}
			\caption{The scenario of the proof}	
			\label{fig:proof}
		\end{figure}
	\end{theorem}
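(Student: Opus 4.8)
The plan is to prove the theorem by an explicit reduction: given an IND-2PRE-CCA adversary $\mathcal{A}$, I would build $\mathcal{B}$ that takes a 3-wDBDHI instance $(g,g^{1/a},g^a,g^{(a^2)},g^b,Q)$, simulates the IND-2PRE-CCA game for $\mathcal{A}$, and decides whether $Q=e(g,g)^{b/a^2}$. The central design goal is to embed the instance so that the challenge ciphertext's session key $K^{*}=e(g,g)^{r^{*}}$ equals $Q$: a genuine $Q$ then makes $C^{*}$ a faithful encryption of $m_\delta$, while a random $Q$ makes $c_5^{*}$ (hence $\delta$) statistically hidden, so any advantage of $\mathcal{A}$ becomes a distinguishing advantage for $\mathcal{B}$. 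Concretely I would set $d$ and the target public key as powers of $a^2$ (e.g.\ $d=g^{a^2\delta_0}$ and $g^{a_{i^{*},1}}=g^{a^2 y}$ for known $\delta_0,y$), so that $c_1^{*}=(g^b)^{\delta_0}$ and $c_2^{*}=(g^b)^{y}$ are computable and $K^{*}=e(c_2^{*},g)^{1/a_{i^{*},1}}=Q$; the components $c_3^{*},c_4^{*},c_8^{*},c_9^{*}$ use randomness $\mathcal{B}$ picks itself, and the broadcast mask $e(g_1,g_n)^{t^{*}}$ is computable because $\mathcal{B}$ chooses $a_{i^{*},3}$.

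Next I would handle the partition and the key/re-key oracles. Following Coron's technique, each answer of $\mathcal{O}_{pk}$ is tagged with a biased coin deciding whether that user is \emph{honest} (full secret key known to $\mathcal{B}$) or a \emph{target candidate} (secret key embedding $a$). Then $\mathcal{O}_{sk}$ is answerable exactly for honest users, while $\mathcal{O}_{rk}$ and $\mathcal{O}_{re}$ are answered for all admissible combinations using the known-exponent structure together with the available $g^{1/a}$, aborting only on the combination that would require an unavailable element (a target-candidate delegator feeding an honest delegatee). Choosing the coin bias optimally makes the event that $\mathcal{A}$'s challenge key is a target candidate while every $\mathcal{O}_{sk}$/$\mathcal{O}_{rk}$ query stays answerable occur with probability at least $1/(\dot{e}(1+q_{sk}+q_{rk}))$, which, combined with the $\tfrac{1}{2}$ loss of the decision-to-distinguishing conversion, produces the leading factor of the claimed bound.

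Then I would simulate the CCA oracles $\mathcal{O}_{1d}$ and $\mathcal{O}_{re}$ without the target secret key. The key device is the Boneh--Boyen ``all-but-one'' programming of $u,v,w$ against the tag carried by $(h,k)$: writing $u^h v^k w=g^{f(h,k)}d^{g(h,k)}$ for simulator-chosen low-degree $f,g$, any valid ciphertext (one passing the relevant pairing validity checks, e.g.\ \ref{eq4} for a first-level query and \ref{eq1}--\ref{eq3} or \ref{eq5} for re-encryption) with $c_1=d^{r}$ and its $(u^h v^k w)^{r}$ component lets $\mathcal{B}$ recover $K=e(g,g)^{r}$ by pairing and dividing out $c_1^{g(h,k)}$, provided the tag avoids the single special value where $f$ vanishes. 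This answers $\mathcal{O}_{1d}$, and (after re-encrypting) $\mathcal{O}_{re}$, for every tag except the challenge one. The probability that a query hits the special tag, or that the $c_9=H_1(K,g^\eta)$ key-confirmation and the validity checks accept an ill-formed ciphertext, is bounded by $\tfrac{q_d+q_{re}}{p}$ plus the chance $\mathcal{A}$ forces a collision in $H$ or $H_1$, giving the subtracted terms $\tfrac{q_d+q_{re}}{p}+Adv^{HASH}_{H,\mathcal{A}}+Adv^{HASH}_{H_1,\mathcal{A}}$; counting exponentiations and pairings per query yields $t'\le t+\mathcal{O}(\tau(q_{pk}+q_{rk}+q_{re}+q_d))$.

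The step I expect to be the main obstacle is reconciling three constraints at once: the exponent embedding must be quadratic in $a$ for $c_2^{*}$ and $K^{*}=Q$ to be computable, yet every re-encryption key and re-encryption query that the partition leaves answerable must also be computable from $(g^{1/a},g^a,g^{(a^2)},g^b)$, and the tag-based trapdoor must recover exactly $e(g,g)^{r}$ for all non-challenge ciphertexts while the challenge mask $H_1(K^{*},\cdot)$ stays hidden when $Q$ is random. Making the decryption trapdoor consistent with the two-layer mask of $c_5$ (the $K$-mask and the broadcast $e(g_1,g_n)^{t}$-mask that re-encryption cancels) and arguing that $\mathcal{B}$ never rejects a ciphertext the real decryption would accept, except with the claimed $1/p$-order error, is the most delicate bookkeeping, and is precisely where collision resistance of $H$ and $H_1$ is indispensable.
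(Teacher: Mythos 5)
Your proposal follows essentially the same reduction as the paper: embed the 3-wDBDHI instance with $d$ and the target public key as $g^{a^2}$-powers so the challenge session key equals $Q$, use Coron's partitioning to survive the $\mathcal{O}_{sk}$/$\mathcal{O}_{rk}$ queries, and program $u,v,w$ as an all-but-one tag trapdoor (the paper's $\alpha_i/\beta_i$ decomposition) to answer $\mathcal{O}_{re}$ and $\mathcal{O}_{1d}$ for every tag except the challenge one, with the same $\tfrac{q_d+q_{re}}{p}$ and hash-collision error terms. Your two-class partition and your attribution of the factor $\tfrac{1}{2}$ to the decision conversion (rather than the paper's three-class partition absorbing it) are only cosmetic variants of the same argument and yield the same bound.
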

	
	\begin{proof}
		First, we set a 3-wDBDHI instance to be $(g,A_{-1}=g^{1/a},A_1=g^a,A_2=g^{(a^2)},B=g^b,Q)\in \mathbb{G}^5\times \mathbb{G}_T$ with randomly chosen $a,b\in_R \mathbb{Z}_p$. Then we build an algorithm $\mathcal{B}$ who plays an IND-2PRE-CCA game with an adversary $\mathcal{A}$ in order to break the 3-wDBDHI assumption.  
		
		\noindent \textbf{Setup}. $\mathcal{B}$ setups public parameters $d=A_2^{\alpha_0},u=A_1^{\alpha_1}A_2^{\beta_1},v=A_1^{\alpha_2}A_2^{\beta_2},w=A_1^{\alpha_3}A_2^{\beta_3}$ for randomly chosen $\alpha_0,\alpha_1,\alpha_2,\alpha_3,\beta_1,\beta_2,\beta_3 \in_R \mathbb{Z}_p$.
		\\\textbf{Phase 1}. In  phase 1, $\mathcal{A}$ can issue queries in the IND-2PRE-CCA game. $\mathcal{B}$ keeps a table $T$ and answers the queries for $\mathcal{A}$ as follows:
		
		\begin{itemize}
			\item $\textit{Public Key Oracle}\quad \mathcal{O}_{pk}(i):$\\
			$\mathcal{B}$ selects $x_{i,1},x_{i,2},x_{i,3}\in_R \mathbb{Z}_p$. Next, using the Coron's technique \cite{Coron_2000}, we choose a number $s_i\in \{0,1,\lq - \rq\}$ such that $\text{Pr}[s_i=0]=\text{Pr}[s_i=1]=\theta$ and $\text{Pr}[s_i= \lq - \rq]=1-2\theta$ where $\theta$ will be determined later. If $s_i=\lq - \rq$, $\mathcal{B}$ sets $pk_i=(g^{x_{i,1}},g^{x_{i,2}},\Delta_i)$; If $s_i=0$, $\mathcal{B}$ sets $pk_i=(A_2^{x_{i,1}},g^{x_{i,2}},\Delta_i)$; If $s_i=1$, $\mathcal{B}$ sets $pk_i=(A_1^{x_{i,1}},g^{x_{i,2}},\Delta_i)$. Note that $\Delta_i = \langle g_1,g_2,\dots,g_n,g_{n+2},\dots,g_{2n} \rangle$ where $g_\rho = g^{x_{i,3}^{\rho}}$ and $n$ represents the number of the types specified by user $i$.
			$\mathcal{B}$ then adds an entry $(pk_i,x_{i,1},x_{i,2},x_{i,3},s_i)$ into the table $T$ and sends $pk_i$ to $\mathcal{A}$. 
			
			\item $\textit{Secret Key Oracle}\quad \mathcal{O}_{sk}(pk_i):$\\
			On input of user $i$'s public key $pk_i$, $B$ searches the entry $(pk_i,x_{i,1},x_{i,2},x_{i,3},s_i)$ in the table $T$. If $s_i= \lq - \rq$, it returns $sk_i=(x_{i,1},x_{i,2},x_{i,3})$ to $\mathcal{A}$. Otherwise, $\mathcal{B}$ outputs $\perp$ and terminates the process.
			
			\item $\textit{Re-encryption key oracle}\quad \mathcal{O}_{rk}(S,pk_i,pk_j):$\\
			$\mathcal{B}$ first accesses entries  $(pk_i,x_{i,1},x_{i,2},x_{i,3},s_i)$ and $(pk_j,x_{j,1},x_{j,2},x_{j,3},s_j)$ from table $T$ and then generates the re-encryption key $ rk_{i \xrightarrow[S]{} j}$ for $\mathcal{A}$ according to the following cases:
			\begin{itemize}
				\item[-] {$s_i = \lq - \rq $} : It means that $sk_i=(x_{i,1},x_{i,2},x_{i,3})$. $\mathcal{B}$ returns $ rk_{i \xrightarrow[S]{} j}=(pk_{j,1}^{1/x_{i,1}},\prod_{\nu\in S}g_{n+1-\nu}^{x_{i,2}})$.
				
				\item[-] {$s_i = s_j $} : $\mathcal{B}$ returns  $ rk_{i \xrightarrow[S]{} j}=(g^{x_{j,1}/x_{i,1}},\prod_{\nu \in S}g_{n+1-\nu }^{x_{i,2}})$.
				
				\item[-] {$s_i = 1 \wedge s_j = 0 $} : It means that $sk_{i,1}=ax_{i,1}$ and $sk_{j,1}=a^2x_{j,1}$.  $\mathcal{B}$ returns  $ rk_{i \xrightarrow[S]{} j}=(A_1^{x_{j,1}/x_{i,1}},\prod_{\nu\in S}g_{n+1-\nu}^{x_{i,2}})$.
				
				\item[-] {$s_i = 1 \wedge s_j = \lq - \rq $} : It means that $sk_{i,1}=ax_{i,1}$ and $sk_{j,1}=x_{j,1}$.  $\mathcal{B}$ returns  $ rk_{i \xrightarrow[S]{} j}=(A_{-1}^{x_{j,1}/x_{i,1}},\prod_{\nu\in S}g_{n+1-\nu }^{x_{i,2}})$.
				
				\item[-] {$s_i = 0 \wedge s_j = 1 $} : It means that $sk_{i,1}=a^2x_{i,1}$ and $sk_{j,1}=ax_{j,1}$.  $\mathcal{B}$ returns  $ rk_{i \xrightarrow[S]{} j}=(A_{-1}^{x_{j,1}/x_{i,1}},\prod_{\nu\in S}g_{n+1-\nu }^{x_{i,2}})$.
				
				\item[-] {$s_i = 0 \wedge s_j = \lq - \rq $} : $\mathcal{B}$ outputs $\perp$ and terminates the process.
				
			\end{itemize}

			\item $\textit{Re-encryption oracle}\quad \mathcal{O}_{re}(S,pk_i,pk_j,C):$\\
			$\mathcal{B}$ first accesses $C=(k,c_1,c_2,c_3,c_4,c_5,c_6,c_7,c_8,c_9)$ and checks if $C$'s type $\rho\in S$. Then, it checks the validity of the ciphertext via \ref{eq3}, and \ref{eq5}. If the verification does not succeed, it outputs $\perp$ to $\mathcal{A}$. Otherwise, $\mathcal{B}$ does the following steps:
			\begin{enumerate}
				\item Parse tuples  $(pk_i,x_{i,1},x_{i,2},x_{i,3},s_i)$ and $(pk_j,x_{j,1},x_{j,2},x_{j,3},s_j)$ from table $T$.
				\item If $(s_i = 0 \wedge s_j = \lq - \rq)$ holds, it means that $sk_{i,1}=a^2x_{i,1}$ and $sk_{j,1}=x_{j,1}$. From
				$c_1=d^r=A_2^{r \cdot \alpha_0}$ and $c_4=(u^h \cdot v^k \cdot  w)^r=(A_1^{\alpha_1 h + \alpha_2 k + \alpha_3} \cdot A_2^{\beta_1h+\beta_2k+\beta_3})^r$ where $h=H(c_1,c_3)$, $\mathcal{B}$ can compute:
				\[
				A_1^r=\left(\frac{c_6}{c_1^{\frac{\beta_1h+\beta_2k+\beta_3}{\alpha_0}}}\right)^{\frac{1}{\alpha_1 h + \alpha_2 k + \alpha_3}}
				\] 
				Then $\mathcal{B}$ can compute $K=e(A_{-1},A_1^r)=e(g,g)^r$, set $c_1^\prime=c_1,c_2^\prime=K^{x_{j,1}}=e(pk_j,g)^r,c_3^\prime=c_3 \oplus H_1(e(g_1,g_n)^t,c_1),c_4^\prime=c_7, c_5^\prime = c_8, c_6^\prime = c_9$ and return $C^\prime=(k,c_1^\prime,c_2^\prime,c_3^\prime,c_4^\prime, c_5^\prime,c_6^\prime)$ to $\mathcal{A}$.
				\item Otherwise, output the re-encryption key $rk_{i \xrightarrow[S]{} j}$ by querying the re-encryption key oracle $\mathcal{O}_{rk}$, and then return ReEnc($rk_{i \xrightarrow[S]{} j}, C $) to $\mathcal{A}$.
				
			\end{enumerate}
			
			Note that the public parameters $u=A_1^{\alpha_1}A_2^{\beta_1},v=A_1^{\alpha_2}A_2^{\beta_2},w=A_1^{\alpha_2}A_2^{\beta_2}$, $\alpha_1,\alpha_2$ and $\alpha_3$ are blinded by $\beta_1,\beta_2$ and $\beta_3$. Therefore, no information about  $\alpha_1,\alpha_2$ and $\alpha_3$ is presented to the adversary. The equality $\alpha_1 h + \alpha_2 k + \alpha_3 =0$ mod $p$ information-theoretically holds with probability $\frac{1}{p}$.
			
			\item $\textit{First level decryption oracle}\quad \mathcal{O}_{1d}(pk_j,C^\prime):$\\
			$\mathcal{B}$ first accesses $C^\prime=(k,c_1^\prime,c_2^\prime,c_3^\prime,c_4^\prime,c_5^\prime,c_6^\prime)$. Next, it recovers $(pk_j,x_{j,1},x_{j,2},x_{j,3},s_j)$ from table $T$. If $s_j=\lq - \rq$, it means that $sk_{j,1}=x_{j,1}$. $\mathcal{B}$ sends Dec\textsubscript{1}($sk_j,C^\prime$) to $\mathcal{A}$. Otherwise, $\mathcal{B}$ does the following steps:
			\begin{enumerate}
				\item Compute $h^\prime =H(c_1^\prime,c_3^\prime)$ and validate the ciphertext via \ref{eq4}. If the verification does not succeed, output $\perp$.
				\item Compute $A_1^r=\left(\frac{c_4^\prime}{(c_1^\prime)^{\frac{\beta_1h^\prime+\beta_2k+\beta_3}{\alpha_0}}}\right)^{\frac{1}{\alpha_1 h^\prime + \alpha_2 k + \alpha_3}}$ and $K=e(A_{-1},A_1^r)=e(g,g)^r$. Note that, similar to the re-encryption oracle $\mathcal{O}_{re}$, probability of $\alpha_1h^\prime+\alpha_2k+\alpha_3=0$ mod $p$ is $\frac{1}{p}$.
				\item If $H_1(K,c_5^\prime)\neq c_6^\prime$, output $\perp$. Otherwise, output $m=H_1(K,c_1^\prime)\oplus c_3^\prime$.
			\end{enumerate}
			
			If $\alpha_1 h^\prime + \alpha_2 k + \alpha_3=0$ mod $p$, $\mathcal{B}$ can reject this invalid first level ciphertexts. For $C^\prime=(k,c_1^\prime,c_2^\prime,c_3^\prime,c_4^\prime,c_5^\prime,c_6^\prime)$ under public key $pk_j$, the validity of $k,c_1^\prime,c_3^\prime$ can be ensured by \ref{eq4}. Thus, $\mathcal{B}$ only needs to validate $c_2^\prime$. Suppose $c_1^\prime=d^r,c_3^\prime=H_1(K,c_1^\prime) \oplus m$ and $c_4^\prime=(u^{H(c_1^\prime,c_3^\prime)}\cdot v^k \cdot w)^r$, where $K=e(g,g)^r$. To validate $c_2^\prime$, $\mathcal{B}$ has to check whether $c_2^\prime=e(pk_{j,1},g)^r$ holds. Fortunately, $\mathcal{B}$ can compute $e(pk_{j,1},g)^r$ according to the following cases:
			\begin{itemize}
				\item [-] {$s_j = 1 $} (it means that $pk_{j,1}=A_1^{x_{j,1}}$): $\mathcal{B}$ can obtain $e(pk_{j,1},g)^r$  by computing
				$$
				\begin{array}{l}
				e(c_1^\prime,A_{-1})^{\frac{x_{j,1}}{\alpha_0}}=e(d^r,A_{-1})^{\frac{x_{j,1}}{\alpha_0}}= \\e(A_2^{\alpha_0r},A_{-1})^{\frac{x_{j,1}}{\alpha_0}} = e(A_1^{x_{j,1}},g)=e(pk_{j,1},g)^r.
				\end{array}
				$$
				\item [-] {$s_j = 0 $} (it means that $pk_{j,1}=A_2^{x_{j,1}}$): $\mathcal{B}$ can obtain $e(pk_{j,1},g)^r$  by computing
				$$
				\begin{array}{l}
				e(c_1^\prime,g)^{\frac{x_{j,1}}{\alpha_0}}=e(d^r,g)^{\frac{x_{j,1}}{\alpha_0}}\\ =e(A_2^{\alpha_0r},g)^{\frac{x_{j,1}}{\alpha_0}} =e(A_1^{x_{j,1}},g)=e(pk_{j,1},g)^r.			
				\end{array}
				$$
			\end{itemize}
			\noindent \textbf{Challenge}. When \textbf{Phase 1} stage is over, $\mathcal{A}$  sends a public key $pk_{i^*}$ and two messages $m_0,m_1 \in \mathbb{G}$ to $\mathcal{B}$ with the restrictions specified in the IND-2PRE-CCA game. Note that both messages $m_0,m_1$ are of type $\rho$.  $\mathcal{B}$ responds as follows:
			\begin{enumerate}
				\item Access $(pk_{i^*},x_{i^*,1},x_{i^*,2},x_{i^*,3},s_{i^*})$ from table $T$. If $s_{i^*}\neq 0$, $\mathcal{B}$ outputs $\perp$ and terminates the process. Otherwise, it means that $pk_{i^*,1}=A_2^{x_{i^*,1}}$, and $\mathcal{B}$ proceeds to execute the rest steps.
				\item Pick $\delta \in \{0,1\}$ and $t,\eta\in_R \mathbb{Z}_p$. Define $k^*=-\frac{\alpha_1h^*+\alpha_3}{\alpha_2},c_1^*=B^{\alpha_0},c_2^*=B^{x_{i^*,1}},c_3^*=g^t,c_4^*=(g^{x_{i^*,2}}\cdot g_\rho)^t,c_5^*=H_1(Q,c_1^*)\oplus m_\delta \oplus H_1(e(g_1,g_n)^t,c_1^*),c_6^*=B^{(\beta_1h^*+\beta_2k^*+\beta_3)}$, $c_7^*=B^{(\beta_1h^{*\prime}+\beta_2k^*+\beta_3)}$, $c_8^*= g^\eta$,$c_9^*=H_1(Z^{r^*},c_8^*)$ where $h^*=H(c_1^*,c_5^*)$ and $h^{*\prime}=H(c_1^*,H_1(Q,c_1^*)\oplus m_\delta)$. Return $C^*=(k^*,c_1^*,c_2^*,c_3^*,c_4^*,c_5^*,c_6^*,c_7^*,c_8^*,c_9^*)$ as the
				challenge ciphertext to $\mathcal{A}$.
				
			\end{enumerate}
			Observe that, if $Q=e(g,g)^{\frac{b}{a^2}}$, $C^*$ is indeed a valid challenge ciphertext under public key $pk_{i^*}$. To see this, setting $r^*=\frac{b}{a^2}$, we have
			
			$$
			\begin{array}{cl}
			c_1^* &= B^{\alpha_0} = (g^{a^2})^{\alpha_0 \cdot \frac{b}{a^2}} = (A_2^{\alpha_0})^{r^*} = d^{r^*}\\
			c_2^* &= B^{x_{i^*}} = (g^{a^2})^{x_{i^*} \cdot \frac{b}{a^2}} =  (A_2^{x_{i^*}})^{r^*} = pk_{i^*,1}^{r^*}\\
			c_3^* &= g^t\\
			c_4^* &=(g^{x_{i^*,2}}\cdot g_\rho)^t\\
			c_5^* &= H_1(Q,c_1^*)\oplus m_\delta \oplus H_1(e(g_1,g_n)^t,c_1^*)\\
			&=H_1(Z^{r^*},c_1^*)\oplus m_\delta \oplus H_1(e(g_1,g_n)^t,c_1^*)\\
			c_6^* &= B^{\beta_1h^*+\beta_2k^*+\beta_3} = \left(A_2^{\beta_1h^*+\beta_2k^*+\beta_3}\right)^{r^*} \\
			&= \left(A_1^{\alpha_1h^*}A_1^{-\frac{\alpha_1h^*+\alpha_3}{\alpha_2}\cdot\alpha_2}A_1^{\alpha_3}A_2^{\beta_1h^*+\beta_2k^*+\beta_3}\right)^{r^*}\\
			&=\left(A_1^{\alpha_1h^*}A_1^{k^*\cdot \alpha_2}A_1^{\alpha_3}A_2^{\beta_1h^*+\beta_2k^*+\beta_3}\right)^{r^*}\\
			&= \left((A_1^{\alpha_1}A_2^{\beta_1})^{h^*}\cdot (A_1^{\alpha_2}A_2^{\beta_2})^{k^*}\cdot (A_1^{\alpha_1}A_2^{\beta_1})\right)^{r^*}\\
			&=\left(u^{h^*}\cdot v^{k^*} \cdot w\right)^{r^*}\\
			c_7^* &= B^{\beta_1h^{\prime*}+\beta_2k^*+\beta_3} = \left(A_2^{\beta_1h^{\prime*}+\beta_2k^*+\beta_3}\right)^{r^*}\\
			&= \left(A_1^{\alpha_1h^{\prime*}}A_1^{-\frac{\alpha_1h^{\prime*}+\alpha_3}{\alpha_2}\cdot\alpha_2}A_1^{\alpha_3}A_2^{\beta_1h^{\prime*}+\beta_2k^*+\beta_3}\right)^{r^*}\\
			&=\left(A_1^{\alpha_1h^{\prime*}}A_1^{k^*\cdot \alpha_2}A_1^{\alpha_3}A_2^{\beta_1h^{\prime*}+\beta_2k^*+\beta_3}\right)^{r^*}\\
			&= \left((A_1^{\alpha_1}A_2^{\beta_1})^{h^{\prime*}}\cdot (A_1^{\alpha_2}A_2^{\beta_2})^{k^*}\cdot (A_1^{\alpha_1}A_2^{\beta_1})\right)^{r^*}\\
			&=\left(u^{h^{\prime*}}\cdot v^{k^*} \cdot w\right)^{r^*}\\
			c_8^* &= g^\eta\\
			c_9^* &= H_1(Q,c_8^*) = H_1(Z^{r^*},c_8^*)\\
			\end{array}$$
			\textbf{Phase 2}. $\mathcal{B}$ responds the queries for $\mathcal{A}$ as that in the \textbf{Phase 1}. $\mathcal{A}$ follows the restrictions described in the IND-2PRE-CCA game. Note that although the $C^*$ exposes the information $\alpha_1h^*+\alpha_2k^*+\alpha_3=0$ to $\mathcal{A}$, the probability of event causing $\mathcal{B}$ to abort is $\frac{1}{p}$.\\
			
			\noindent \textbf{Output}. Finally, $\mathcal{A}$ returns $\delta^\prime\in \{0,1\}$. If $\delta^\prime=\delta$, $\mathcal{B}$ returns $\beta^\prime=1$; else, returns $\beta^\prime=0$. We can see that the simulation of oracle $\mathcal{O}_{pk}$ is perfectly simulated. Here, \textbf{terminate} stands for $\mathcal{B}$'s aborting in oracles $\mathcal{O}_{sk},\mathcal{O}_{rk}$ or in the \textbf{Challenge}. We have $\Pr[$\textbf{terminate}$]=(1-2\theta)^{q_{sk}}(\theta(1-2\theta))^{q_{rk}}\theta\geqslant (1-2\theta)^{q_{sk}+q_{rk}}\theta$, which is maximized at $\theta_{opt}=\frac{q_{sk}+q_{rk}}{2(1+q_{sk}+q_{rk})}$. With $\theta_{opt}$, the probability $\Pr[$\textbf{terminate}$]$ is at least $\frac{1}{2\dot{e}(1+q_{sk}+q_{rk})}$. The simulation for oracles $\mathcal{O}_{sk},\mathcal{O}_{rk}$ and the challenge ciphertext are perfectly simulated if \textbf{terminate} does not happen. The simulation of $\mathcal{O}_{re}$ is also perfect, unless $\alpha_1h +\alpha_2k+\alpha_3=0$ mod $p$ happens (denoting this event by \textbf{ReEErr}). Nonetheless, as aforementioned, the formula $\alpha_1h+\alpha_2k+\alpha_3=0$ mod $p$ holds in every query with probability $\frac{1}{p}$. Therefore, we have $\Pr[$\textbf{ReEErr}$]\leqslant\frac{q_{re}}{p}$.\\
			The decryption oracle $\mathcal{O}_{1d}$ is also perfectly simulated, unless $\alpha_1h+\alpha_2k+\alpha_3=0$ mod $p$ happens during the simulation (denoting this event by \textbf{DecErr}). We have $\Pr[$\textbf{DecErr}$]\leqslant\frac{q_d}{p}$. Then we know that $\mathcal{B}$'s advantage against the 3-wDBDHI assumption
			is $\epsilon^\prime\geqslant\frac{\epsilon}{2\dot{e}(1+q_{sk}+q_{rk})}-\frac{q_d+q_{re}}{p}-Adv^{HASH}_{H,\mathcal{A}}-Adv^{HASH}_{H_1,\mathcal{A}}$, and $\mathcal{B}$'s running time is $t^\prime\leqslant t+\mathcal{O}(\tau (q_{pk}+q_{rk}+q_{re}+q_{d})).$  The results complete the proof of a second level ciphertext.
			
		\end{itemize}
	\end{proof}
	
	\subsection{Security of a First Level Ciphertext}
	\label{subsec:first ciphertext}
	\begin{theorem}
		\label{th5}
		If there is an adversary $\mathcal{A}$ who is able to break the $(t,q_{pk},q_{sk},q_{rk},q_{d},\epsilon)$-IND-1PRE-CCA security, then we can build an algorithm $\mathcal{B}$ that can break the $(t^\prime,\epsilon^\prime)$-3-wDBDHI assumption in ($\mathbb{G},\mathbb{G}_T$) with
		$$
		\begin{array}{l}
		\epsilon^\prime \geq \frac{\epsilon}{\dot{e}(1+q_{sk})}-\frac{q_d}{p}-Adv^{HASH}_{H,\mathcal{A}}-Adv^{HASH}_{H_1,\mathcal{A}},\\ t^\prime\leq t+\mathcal{O}(\tau (q_{pk}+q_{rk}+q_{d}))
		\end{array}
		$$
		in which $\tau$ represents the maximum computation time to calculate an exponentiation, a multi-exponentiation and a paring, and $\dot{e}$ represents the base of the natural logarithm. The scenario of the proof is depicted in Fig \ref{fig:proof}.
	\end{theorem}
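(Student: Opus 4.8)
\section*{Proof Proposal for the First Level Theorem}

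The plan is to reduce IND-1PRE-CCA security to the 3-wDBDHI assumption, re-using the machinery of the second level proof but with a strictly simpler key partition. Given an instance $(g,A_{-1}=g^{1/a},A_1=g^a,A_2=g^{(a^2)},B=g^b,Q)$, I build $\mathcal{B}$ that sets up the public parameters exactly as in the second level proof, namely $d=A_2^{\alpha_0}$, $u=A_1^{\alpha_1}A_2^{\beta_1}$, $v=A_1^{\alpha_2}A_2^{\beta_2}$, $w=A_1^{\alpha_3}A_2^{\beta_3}$ for random $\alpha_0,\dots,\beta_3$. The observation that makes the first level reduction cheaper is that a first level ciphertext can never be re-encrypted again, so $\mathcal{A}$ is entitled to \emph{every} re-encryption key (including those leaving $pk_{i^*}$) and no re-encryption oracle is required; hence the only forbidden key query is $\mathcal{O}_{sk}(pk_{i^*})$, and I must embed the challenge so that this is the sole cause of failure in the key oracles.

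Accordingly I would apply Coron's technique with only two outcomes: on a public-key query I draw $s_i=1$ with probability $\theta$ and $s_i=0$ with probability $1-\theta$, set $pk_{i,1}=A_1^{x_{i,1}}$ when $s_i=1$ and $pk_{i,1}=g^{x_{i,1}}$ when $s_i=0$ (the component $pk_{i,2}=g^{x_{i,2}}$ and $\Delta_i$ are always in normal form), and record the entry in $T$. The secret key oracle answers only $s_i=0$ keys and aborts on $s_i=1$, while the challenge step requires $s_{i^*}=1$ and aborts otherwise. This yields $\Pr[\text{no abort}]=(1-\theta)^{q_{sk}}\theta$, optimized at $\theta=\frac{1}{1+q_{sk}}$ to give $\frac{1}{\dot{e}(1+q_{sk})}$, which accounts for the absence of both the factor $2$ and the $q_{rk}$ term present in the second level bound.

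The step I expect to be the main obstacle is verifying that this two-way partition really lets $\mathcal{B}$ answer \emph{all} re-encryption key queries without ever aborting, since this is precisely what removes $q_{rk}$ from the loss. Writing $a_{i,1}=x_{i,1}$ or $a\,x_{i,1}$ according to $s_i$, the first component $r_1=(g^{a_{j,1}})^{1/a_{i,1}}$ must be produced in each of the four cases: both-normal gives $g^{x_{j,1}/x_{i,1}}$; $(s_i,s_j)=(0,1)$ gives $A_1^{x_{j,1}/x_{i,1}}$; $(1,0)$ gives $A_{-1}^{x_{j,1}/x_{i,1}}$; and $(1,1)$ gives $g^{x_{j,1}/x_{i,1}}$, so each is computable from the instance, and $r_2=\prod_{\nu\in S}g_{n+1-\nu}^{x_{i,2}}$ is always computable from $\Delta_i$. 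This is exactly why the target must carry the $A_1$-form rather than the $A_2$-form: division by $a_{i^*,1}=a\,x_{i^*,1}$ in $r_1$ is realizable through $A_{-1}=g^{1/a}$, whereas an $A_2$-form target would require the unavailable $g^{1/a^2}$. For the first level decryption oracle I reuse the second level trick: when $s_j=0$ I decrypt directly with $x_{j,1}$, and when $s_j=1$ I validate via \ref{eq4}, recover $A_1^r=\big(c_4^\prime/(c_1^\prime)^{(\beta_1h^\prime+\beta_2k+\beta_3)/\alpha_0}\big)^{1/(\alpha_1h^\prime+\alpha_2k+\alpha_3)}$, set $K=e(A_{-1},A_1^r)=e(g,g)^r$, check $c_2^\prime$ against $e(c_1^\prime,A_{-1})^{x_{j,1}/\alpha_0}=e(pk_{j,1},g)^r$ as in the second level argument, and finally test \ref{eq6}; this fails only when $\alpha_1h^\prime+\alpha_2k+\alpha_3\equiv 0 \pmod p$, an event of probability $1/p$ per query contributing the $q_d/p$ loss.

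Finally, on the challenge I set implicitly $r^*=b/a^2$ and output a first level ciphertext with $c_1^{\prime*}=B^{\alpha_0}=d^{r^*}$, $c_2^{\prime*}=e(B,A_{-1})^{x_{i^*,1}}=e(pk_{i^*,1},g)^{r^*}$, $c_3^{\prime*}=H_1(Q,c_1^{\prime*})\oplus m_\delta$, $k^*=-\frac{\alpha_1h^*+\alpha_3}{\alpha_2}$ so that $c_4^{\prime*}=B^{\beta_1h^*+\beta_2k^*+\beta_3}=(u^{h^*}v^{k^*}w)^{r^*}$, together with $c_5^{\prime*}=g^\eta$ and $c_6^{\prime*}=H_1(Q,c_5^{\prime*})$, where $h^*=H(c_1^{\prime*},c_3^{\prime*})$. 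When $Q=e(g,g)^{b/a^2}$ one verifies $Q=(c_2^{\prime*})^{1/a_{i^*,1}}$, so $C^{\prime*}$ is a correctly distributed encryption of $m_\delta$, whereas a uniform $Q$ leaves $\delta$ hidden up to the collision resistance of $H$ and $H_1$. Relaying $\mathcal{A}$'s guess via $\beta^\prime=1$ iff $\delta^\prime=\delta$ and combining the abort probability with the decryption-error and hash terms then yields $\epsilon^\prime\geq \frac{\epsilon}{\dot{e}(1+q_{sk})}-\frac{q_d}{p}-Adv^{HASH}_{H,\mathcal{A}}-Adv^{HASH}_{H_1,\mathcal{A}}$ and $t^\prime\leq t+\mathcal{O}(\tau(q_{pk}+q_{rk}+q_{d}))$, as claimed.
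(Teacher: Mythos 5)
Your proposal is correct and follows essentially the same reduction as the paper's proof: the same parameter embedding $d=A_2^{\alpha_0}$, $u,v,w=A_1^{\alpha_\cdot}A_2^{\beta_\cdot}$, a two-outcome Coron partition (your $s_i=1$ is the paper's $s_i=0$), the same exhaustive re-encryption-key simulation that eliminates any $q_{rk}$ loss, the same $\mathcal{O}_{1d}$ handling via recovery of $A_1^r$ plus the pairing check $e(c_1^\prime,A_{-1})^{x_{j,1}/\alpha_0}$ on $c_2^\prime$, and the same challenge embedding with $r^*=b/a^2$ and $k^*=-(\alpha_1h^*+\alpha_3)/\alpha_2$, yielding the identical bounds. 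Your explicit value $c_2^{\prime*}=e(B,A_{-1})^{x_{i^*,1}}$ agrees with the paper's well-formedness verification (and correctly places $c_2^{\prime*}$ in $\mathbb{G}_T$, whereas the paper's definition line writes $B^{x_{i,1}^*}$, an apparent typo).
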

	
	\begin{proof}
		First, we set a 3-wDBDHI instance to be $(g,A_{-1}=g^{1/a},A_1=g^a,A_2=g^{(a^2)},B=g^b,Q)\in \mathbb{G}^5\times \mathbb{G}_T$ with randomly chosen $a,b\in_R \mathbb{Z}_p$. Then we build an algorithm $\mathcal{B}$ who plays a IND-1PRE-CCA game with an adversary $\mathcal{A}$ in order to break the 3-wDBDHI assumption.
		
		\noindent \textbf{Setup}.  $\mathcal{B}$ setups public parameters $d=A_2^{\alpha_0},u=A_1^{\alpha_1}A_2^{\beta_1},v=A_1^{\alpha_2}A_2^{\beta_2},w=A_1^{\alpha_3}A_2^{\beta_3}$ for randomly chosen $\alpha_0,\alpha_1,\alpha_2,\alpha_3,\beta_1,\beta_2,\beta_3 \in_R \mathbb{Z}_p$.
		\\\textbf{Phase 1}. In  phase 1, $\mathcal{A}$ can issues queries in the IND-1PRE-CCA game. $\mathcal{B}$ keeps a table $T$ and answers the queries for $\mathcal{A}$ as follows:
		
		\begin{itemize}
			\item $\textit{Public Key Oracle}\quad \mathcal{O}_{pk}(i):$\\
			$\mathcal{B}$ selects $x_{i,1},x_{i,2},x_{i,3}\in_R \mathbb{Z}_p$. Next, we choose a number $s_i\in \{0,1\}$. If $s_i=0$, $\mathcal{B}$ sets $pk_i=(A_1^{x_{i,1}},g^{x_{i,2}},\Delta_i)$; If $s_i=1$, $\mathcal{B}$ sets $pk_i=(g^{x_{i,1}},g^{x_{i,2}},\Delta_i)$. Note that $\Delta_i = \langle g_1,g_2,\dots,g_n,g_{n+2},\dots,g_{2n} \rangle$ where $g_\rho = g^{x_{i,3}^{\rho}}$ and $n$ represents the number of the types specified by user $i$.
			$\mathcal{B}$ then adds an entry $(pk_i,x_{i,1},x_{i,2},x_{i,3},s_i)$ into the table $T$ and sends $pk_i$ to $\mathcal{A}$. 		
			\item $\textit{Secret Key Oracle}\quad \mathcal{O}_{sk}(pk_i):$\\
			On input of user $i$'s public key $pk_i$, $B$ searches the entry $(pk_i,x_{i,1},x_{i,2},x_{i,3},s_i)$ in the table $T$. If $s_i= 1$, it returns $sk_i=(x_{i,1},x_{i,2},x_{i,3})$ to $\mathcal{A}$. Otherwise, $\mathcal{B}$ outputs $\perp$ and terminates the process.
			
			\item $\textit{Re-encryption key oracle}\quad \mathcal{O}_{rk}(S,pk_i,pk_j):$\\
			$\mathcal{B}$ first accesses entries  $(pk_i,x_{i,1},x_{i,2},x_{i,3},s_i)$ and $(pk_j,x_{j,1},x_{j,2},x_{j,3},s_j)$ from table $T$ and then generates the re-encryption key $ rk_{i \xrightarrow[S]{} j}$ for $\mathcal{A}$ according to the following cases:
			\begin{itemize}
				\item[-] {$s_i = 1 $} : It means that $sk_i=(x_{i,1},x_{i,2},x_{i,3})$. $\mathcal{B}$ returns $ rk_{i \xrightarrow[S]{} j}=(pk_{j,1}^{1/x_{i,1}},\prod_{\nu\in S}g_{n+1-\nu }^{x_{i,2}})$.
				
				\item[-] {$s_i = s_j = 0 $} : $\mathcal{B}$ returns  $ rk_{i \xrightarrow[S]{} j}=(g^{x_{j,1}/x_{i,1}},\prod_{\nu\in S}g_{n+1-\nu }^{x_{i,2}})$.
				
				\item[-] {$s_i = 0 \wedge s_j = 1 $} : It means that $sk_{i,1}=a^2x_{i,1}$ and $sk_{j,1}=ax_{j,1}$.  $\mathcal{B}$ returns  $ rk_{i \xrightarrow[S]{} j}=(A_{-1}^{x_{j,1}/x_{i,1}},\prod_{\nu\in S}g_{n+1-\nu }^{x_{i,2}})$.
			\end{itemize}
			
			\item $\textit{First level decryption oracle}\quad \mathcal{O}_{1d}(pk_j,C^\prime):$\\
			$\mathcal{B}$ first accesses $C^\prime=(k,c_1^\prime,c_2^\prime,c_3^\prime,c_4^\prime,c_5^\prime,c_6^\prime)$. Next, it recovers $(pk_i,x_{i,1},x_{i,2},x_{i,3},s_i)$ from table $T$. If $s_j=1$, it means that $sk_{j,1}=x_{j,1}$ and $\mathcal{B}$ returns Dec\textsubscript{1}($sk_j,C^\prime$) to $\mathcal{A}$. Otherwise, $\mathcal{B}$ does the following steps:
			\begin{enumerate}
				\item Compute $h^\prime=H(c_1^\prime,c_3^\prime)$ and check the validity of the ciphertext by \ref{eq4}. If the verification fails, output $\perp$ indicating an invalid ciphertext; else continue to execute the rest of the steps.
				\item Compute $A_1^r=\left(\frac{c_6}{(c_1^\prime)^{\frac{\beta_1h^\prime+\beta_2k+\beta_3}{\alpha_0}}}\right)^{\frac{1}{\alpha_1 h^\prime + \alpha_2 k + \alpha_3}}$ and $K=e(A_{-1},A_1^r)=e(g,g)^r$. Note that, similar to the analysis in the re-encryption oracle $\mathcal{O}_{re}$, $\alpha_1h^\prime+\alpha_2k+\alpha_3=0$ mod $p$ is with probability at most $\frac{1}{p}$.
				\item If $H_1(K,c_5^\prime)\neq c_6^\prime$, output $\perp$  indicating an invalid ciphertext. Otherwise, output $m=H_1(K,c_1^\prime)\oplus c_3^\prime$.

			\end{enumerate}
			If $\alpha_1 h^\prime + \alpha_2 k + \alpha_3=0$ mod $p$, $\mathcal{B}$ can reject this invalid first level ciphertexts. For $C^\prime=(k,c_1^\prime,c_2^\prime,c_3^\prime,c_4^\prime,c_5^\prime,c_6^\prime)$ under public key $pk_j$, the validity of $k,c_1^\prime,c_3^\prime$ can be ensured by \ref{eq4}. So, $\mathcal{B}$ only needs to check the validity of $c_2^\prime$. Suppose $c_1^\prime=d^r,c_3^\prime=H_1(K,c_1^\prime)\oplus m$ and $c_4^\prime=(u^{H(c_1^\prime,c_3^\prime)}\cdot v^k \cdot w)^r$, where $K=e(g,g)^r$. To
			validate $c_2^\prime$, $\mathcal{B}$ has to check whether $c_2^\prime=e(pk_{j,1},g)^r$ holds. Fortunately, $\mathcal{B}$ can compute $e(pk_{j,1},g)^r$ according to the following:
			$$
			\begin{array}{l}
			e(c_1^\prime,A_{-1})^{\frac{x_{j,1}}{\alpha_0}}= e(d^r,A_{-1})^{\frac{x_{j,1}}{\alpha_0}}=\\ e(A_2^{\alpha_0r},A_{-1})^{\frac{x_{j,1}}{\alpha_0}}=e(A_1^{x_{j,1}},g)=e(pk_{j,1},g)^r.
			\end{array}			
			$$
			
			\noindent \textbf{Challenge}. $\mathcal{A}$ outputs a public key $pk_{i^*}$ and two messages $m_0,m_1\in \mathbb{G}$ with the restrictions specified in the IND-1PRE-CCA game. $\mathcal{B}$ first accesses $(pk_{i^*},x_{i^*,1},x_{i^*,2},x_{i^*,3},s_{i^*})$ from table $T$.  If $s_{i^*}=1$, $\mathcal{B}$ returns $\perp$ and terminates the process. Otherwise, $\mathcal{B}$ picks $\delta\in_R\{0,1\}, t,\eta\in_R \mathbb{Z}_p$, and defines  $c_1^{\prime*}=B^{\alpha_0},c_2^{\prime*}=B^{x_{i,1}^*},c_3^{\prime*}=H_1(Q,c_1^{\prime*})\oplus m_\delta,k^*=-\frac{\alpha_1h^{\prime*}+\alpha_3}{\alpha_2}$ $,c_4^{\prime*}=B^{\beta_1h^{\prime*}+\beta_2k^*+\beta_3},c_5^{\prime*} = g^\eta,c_6^{\prime*} = H_1(Q,c_5^{\prime*})$ where $h^{\prime*}=H(c_1^{\prime*},c_3^{\prime*})$. Then return $C^{\prime*}=(k^*,c_1^{\prime*},c_2^{\prime*},c_3^{\prime*},c_4^{\prime*},c_5^{\prime*},c_6^{\prime*})$   to $\mathcal{A}$.\\
			Note that, if $Q=e(g,g)^{\frac{b}{a^2}}$, $C^{\prime*}$ is actually a valid challenge ciphertext. To verify this, letting $r^*=\frac{b}{a^2}$, the well-formedness of $c_1^{\prime*},c_3^{\prime*}$ and $c_4^{\prime*}$ can be seen as that in the proof second level security; while for $c_2^*$, its well-formedness is as shown below:
			$$
			\begin{array}{l}
			c_2^{\prime*}=e(A_{-1},B)^{x_{i,1}^*}=  e(g^{1/a},g^b)^{x_{i,1}^*}= \\ e(g^{a \cdot x_{i,1}^*} ,g)^{\frac{b}{a^2}}=e(pk_{i,1}^*,g)^{r^*}
			\end{array}			
			$$
			
			\noindent \textbf{Phase 2}. $\mathcal{B}$ responds the queries for $\mathcal{A}$ as that in the \textbf{Phase 1}. $\mathcal{A}$ follows the restrictions described in the IND-1PRE-CCA game. \\
			
			\noindent \textbf{Output}. Finally $\mathcal{A}$ outputs $\delta^\prime\in \{0,1\}$. If $\delta^\prime=\delta$, $\mathcal{B}$ returns $\beta^\prime=1$; else, returns $\beta^\prime=0$. We can see that the simulations of oracle $\mathcal{O}_{pk}$ and $\mathcal{O}_{rk}$ are perfectly simulated. The decryption $\mathcal{O}_{1d}$ is perfectly simulated, unless $\alpha_1h^\prime +\alpha_2k+\alpha_3=0$ mod $p$ happens (denoting this event by \textbf{DecErr}). Similar to the proof in the second level security, we have $\Pr[$\textbf{DecErr}$]\leqslant\frac{q_d}{p}$. Here, \textbf{terminate} stands for $\mathcal{B}$'s aborting in oracles $\mathcal{O}_{sk}$ or in the Challenge phase. We have $\Pr[$\textbf{terminate}$]\leqslant\frac{1}{\dot{e}(1+q_{sk})}$. Then we know that $\mathcal{B}$'s advantage against the 3-wDBDHI assumption is $\epsilon^\prime\geqslant\frac{\epsilon}{\dot{e}(1+q_{sk})}-\frac{q_d}{p}-Adv^{HASH}_{H,\mathcal{A}}-Adv^{HASH}_{H_1,\mathcal{A}}$, and $\mathcal{B}$'s running time is $t^\prime\leqslant t+\mathcal{O}(\tau (q_{pk}+q_{rk}+q_{d}))$.  The results completes the proof of a first level ciphertext.
			
		\end{itemize}
	\end{proof}
	
	\subsection{Master Secret Security}
	\begin{theorem} Our key-aggregate proxy re-encryption scheme is MSS-PRE secure if the 3-wDBDHI assumption
		holds in groups ($\mathbb{G},\mathbb{G}_T$). 
	\end{theorem}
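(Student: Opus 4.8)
The plan is to obtain this statement as a direct corollary of two facts already available in the excerpt, rather than building a fresh reduction from the 3-wDBDHI instance. By Theorem~\ref{MSS}, any adversary that breaks MSS-PRE security yields an adversary breaking IND-1PRE-CCA security; and by the first-level security theorem proved above, any IND-1PRE-CCA adversary yields a solver for the 3-wDBDHI problem. Composing these two reductions shows that an MSS-PRE adversary with non-negligible advantage would break the 3-wDBDHI assumption, so under that assumption no such adversary can exist.

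Since the proof of Theorem~\ref{MSS} was omitted, I would spell out the connecting reduction here. Given an MSS-PRE adversary $\mathcal{A}$, I build an IND-1PRE-CCA adversary $\mathcal{B}$ that runs $\mathcal{A}$ as a subroutine. During Phase~1, $\mathcal{B}$ forwards every one of $\mathcal{A}$'s queries to $\mathcal{O}_{pk},\mathcal{O}_{sk},\mathcal{O}_{rk}$ to its own identically named oracles and relays the answers verbatim; the restriction that $\mathcal{A}$ never queries $\mathcal{O}_{sk}(pk_{i^*})$ coincides exactly with the IND-1PRE-CCA restriction, and $\mathcal{B}$ simply declares the same designated $pk_{i^*}$ as its target public key (if the target index is not fixed in advance, $\mathcal{B}$ guesses it among the $q_{pk}$ public keys at the cost of a $1/q_{pk}$ factor). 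When $\mathcal{A}$ halts and outputs a candidate secret key $sk_{i^*}=(a_{i^*,1},a_{i^*,2},a_{i^*,3})$, $\mathcal{B}$ stores it and enters the challenge phase, submitting two distinct messages $m_0,m_1$ and receiving $C^{\prime*}=\text{Enc}_1(pk_{i^*},m_\delta)$.

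To finish, $\mathcal{B}$ decrypts the challenge using the key supplied by $\mathcal{A}$: it computes $m'=\text{Dec}_1(sk_{i^*},C^{\prime*})$ and outputs the bit $\delta'$ with $m_{\delta'}=m'$, guessing at random if neither message matches. Whenever $sk_{i^*}$ is a valid secret key for $pk_{i^*}$, the decryption correctness established in the Correctness section guarantees $m'=m_\delta$, so $\mathcal{B}$ wins exactly when $\mathcal{A}$ succeeds; a short computation then gives $\Pr[\delta'=\delta]=\tfrac12+\tfrac12\,Adv^{MSS\text{-}PRE}_{\Gamma,\mathcal{A}}$, i.e. $\mathcal{B}$'s distinguishing advantage is at least half that of $\mathcal{A}$. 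Note that $\mathcal{B}$ never needs the first-level decryption oracle $\mathcal{O}_{1d}$, so the extra oracle present in the IND-1PRE-CCA game causes no difficulty.

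The only point requiring care is the structural mismatch between the two games: MSS-PRE has no challenge phase, whereas IND-1PRE-CCA does. I resolve this by running the entire MSS-PRE experiment inside Phase~1 of the IND-1PRE-CCA game, so that both the target $pk_{i^*}$ and the recovered $sk_{i^*}$ are fixed before $\mathcal{B}$ requests its challenge ciphertext. I expect this bookkeeping, together with checking that $\mathcal{A}$'s query restrictions never force $\mathcal{B}$ to violate the IND-1PRE-CCA rules, to be the main and essentially only obstacle; the cryptographic content is carried entirely by the already-established first-level security theorem and hence by the 3-wDBDHI assumption.
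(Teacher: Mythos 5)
Your proposal is correct and takes essentially the same route as the paper: the paper's own proof is a brief sketch that composes Theorem~\ref{MSS} (an MSS-PRE adversary yields an IND-1PRE-CCA or IND-2PRE-CCA adversary) with the already-proven reductions from those games to the 3-wDBDHI assumption. You go somewhat further than the paper by explicitly constructing the connecting reduction that Theorem~\ref{MSS} leaves ``intuitive'' and omitted---running the MSS experiment inside Phase~1 and using the recovered key to decrypt the challenge---which is a sound and welcome elaboration of the same argument.
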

	\begin{proof}
		\noindent Assuming that an adversary $\mathcal{A}$ can recover the secret key $sk_i=(x_{i,1},x_{i,2},x_{i,3})$ of a targeted user $i$ with non-negligible probability $\epsilon$, we can simply simulate an adversary $\mathcal{B}$ that breaks $\Gamma$'s IND-1PRE-CCA security or IND-2PRE-CCA security. That is, for a single-use unidirectional proxy re-encryption scheme, the master secret security is implied by the first level ciphertext security and second level ciphertext security.
	\end{proof}
	
	\section{Implementation and Simulation}
	\label{ch:Implementation}
	In this section, we demonstrate the testing environment and the simulation result of our proposed scheme. A brief introduction of the PBC library is also included. Finally, we show the cost time of the algorithms in the proposed KAPRE scheme. 
	The source code of our implementation is available at https://github.com/ferranschen/KAPRE.
	\subsection{Enviroment}
	\label{sec:contribution}
	The testing environment is shown in the following table.
	
	\begin{table}[h!]
		\begin{center}
			\caption{Testing Environment}
			\label{tab:table1}
			\begin{tabular}{l|l}
				
				OS & 64-bit Ubuntu 16.04.4 LTS \\
				CPU & AMD Ryzen 5 1600 Six-Core Processor \\
				RAM & 3.9 GB \\
				DISK & 32.6 GB \\ 
			\end{tabular}
		\end{center}
	\end{table}
	
	\subsection{The Pairing-Based Cryptography Library}
	\label{sec:pbc}
	The PBC (Pairing-Based Cryptography) library is a free library written in C. Users can easily use this library without the need to know too much about elliptic curves or number theory. 
	Researchers or software developers can effortlessly use this library to architect a cryptographic system without having to build their own wheels. 
	In addition, this library is based on GMP (GNU
	Multiple Precision) library, pairing operation speed is reasonable. 
	For more details, we refer the interested users to https://crypto.stanford.edu/pbc/.
	
	\subsection{Simulation Result}
	\label{sec:result}
	We show the actual execution results of our proposed scheme in Fig \ref{fig:scheme}.
	
	\begin{figure}[h]
		\centering
		\includegraphics[width=0.5\textwidth]{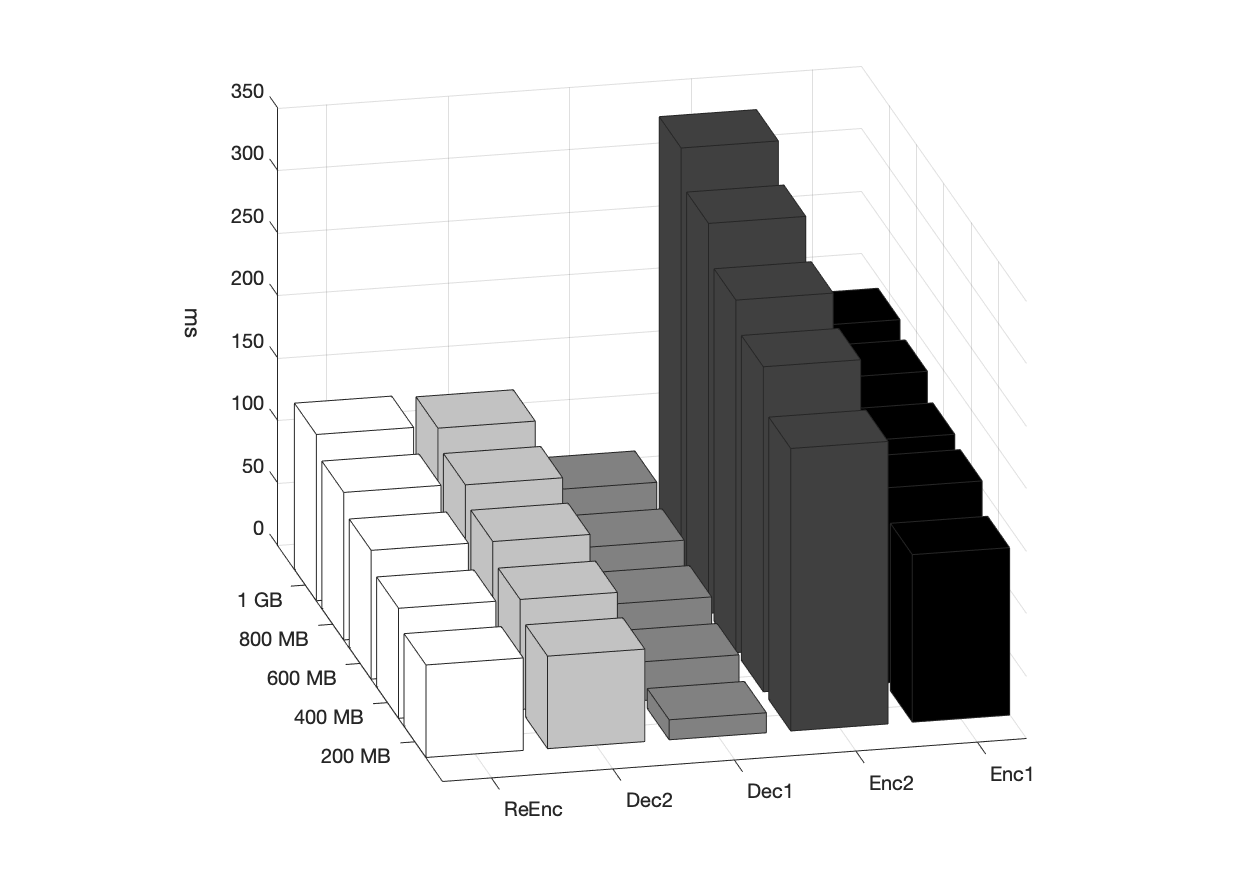}	
		\caption{Computation time of algorithms in our proposed scheme.}
		\label{fig:scheme}
	\end{figure}

	\section{Comparison}
	\label{ch:Comparison}
	
In this section, we make a comprehensive comparison of our proposed scheme with the four related C-PRE schemes \cite{Chu_2009, Fang_2009, Liang_2012, Weng_C_PRE_2009}.
    In TABLE \ref{tab:Properties}, Fig \ref{fig:rekeygencompare}, Fig \ref{fig:rekeylengthcompare}, we compared the speed at which the re-encryption key was made and the storage size it needed. In addition, we also compare the security level, corruption model, standard model, and final-grained access control.
    Fortunately, our scheme is the first CCA secure KAPRE. It is worth mentioning that the proposed scheme achieves fined grained access control using the constant-size re-encryption key.
    As mentioned in \cite{Menezes_2001,Scott_2007,Koblitz_2000,Zhang_2006,Schneier_1999}, we know that $T_p \approx 5T_e, T_s \approx 29 T_m, T_e \approx 240T_m, T_a \approx 0.12T_m$ and $T_h \approx 7.75T_m$.

	\begin{table*}[h]
		\centering
		\begin{footnotesize}	
			\caption{The Comparisons between C-PREs and our scheme.}
			\begin{tabular}{|c|ccccc|}
				\hline
				&Weng \cite{Weng_C_PRE_2009}&Chu \cite{Chu_2009}&Fang \cite{Fang_2009}&Liang \cite{Liang_2012}&Ours\\
				\hline	
				Security &CCA&CPA,RCCA&CCA&CCA&CCA	\\
				\hline	
				Corruption model &Adaptive&Static&Adaptive&Adaptive&Adaptive\\
				\hline	
				Standard model &Yes&Yes&Yes&Yes&Yes	\\
				\hline
				Fine-grained & \multirow{2}{*}{Yes} & \multirow{2}{*}{Yes} & \multirow{2}{*}{Yes} & \multirow{2}{*}{Yes} & \multirow{2}{*}{Yes}\\
				access control &&&&& \\	
				\hline
				ReKeyGen computational
				&\multirow{1}{*}{$2nT_s + nT_h$}
				&\multirow{1}{*}{$2nT_s + 2nT_a$}
				&\multirow{1}{*}{$7nT_s + 3nT_a$}
				&\multirow{1}{*}{$\bigtriangleup  $}
				&\multirow{1}{*}{$2T_s + ( n-1)T_a $}\\
				cost 
				&$\approx 4340n CCs$
				&$\approx 3844n CCs$
				&$\approx 13422n CCs$
				&
				&$ \approx 8n + 3820 CCs$\\
				\hline
				Re-encryption keys 
				&\multirow{2}{*}{$n(\left|\mathbb{G}\right| + \left|\mathbb{G}\right|) = 512n bits$}
				& \multirow{2}{*}{$n(\left|\mathbb{G}\right|) = 256n bits$}
				&\multirow{2}{*}{$n(\left|\mathbb{G}\right| + 3\left|\mathbb{G}\right| ) = 1024nbits$}
				&\multirow{2}{*}{$\bigtriangleup   $}
				& \multirow{2}{*}{$\left|\mathbb{G}\right| + \left|\mathbb{G}\right| = 512 bits$}\\
				Length &&&&&\\
				\hline
				
			\end{tabular}
			
			\label{tab:Properties}
		\end{footnotesize}
		\begin{itemize}
			\item $\left| \mathbb{G}\right|$: the length of an element in $\mathbb{G}$
			\item $\left| \mathbb{G}_T\right|$: the length of an element in $\mathbb{G}_T$
			\item $n$ : the number of total file types
			\item $T_p$ : the cost of a pairing operation
			\item $T_m$ : the cost of a modular multiplication in $Z_q$
			\item $T_s$ : the cost of a scalar multiplication in an additive group or an exponentiation in a multiplicative group
			\item $T_a$ : the cost of an addition in an additive group or a multiplication in a multiplicative group
			\item $T_h$ : the cost of a hash operation
			\item $CCs$ : clock cycles
			\item $\bigtriangleup$ : the performance depends on the cryptographic primitives 
		\end{itemize}
	\end{table*}
	\begin{figure}[h]
		\centering
		\includegraphics[width=0.5\textwidth]{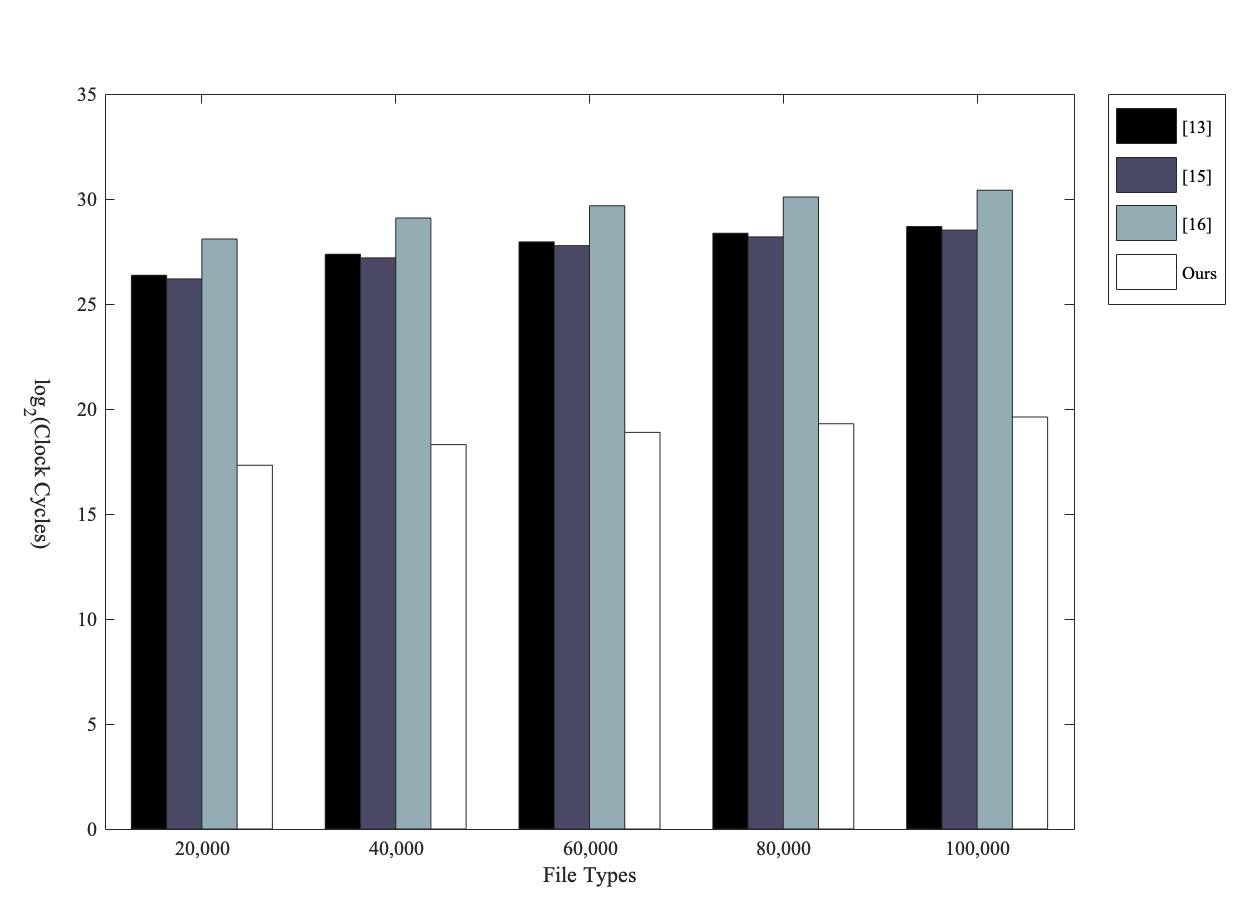}	
		\caption{Comparison of computation time of $ReKenGen$}
		\label{fig:rekeygencompare}
	\end{figure}
	
	\begin{figure}[h]
		\centering
		\includegraphics[width=0.5\textwidth]{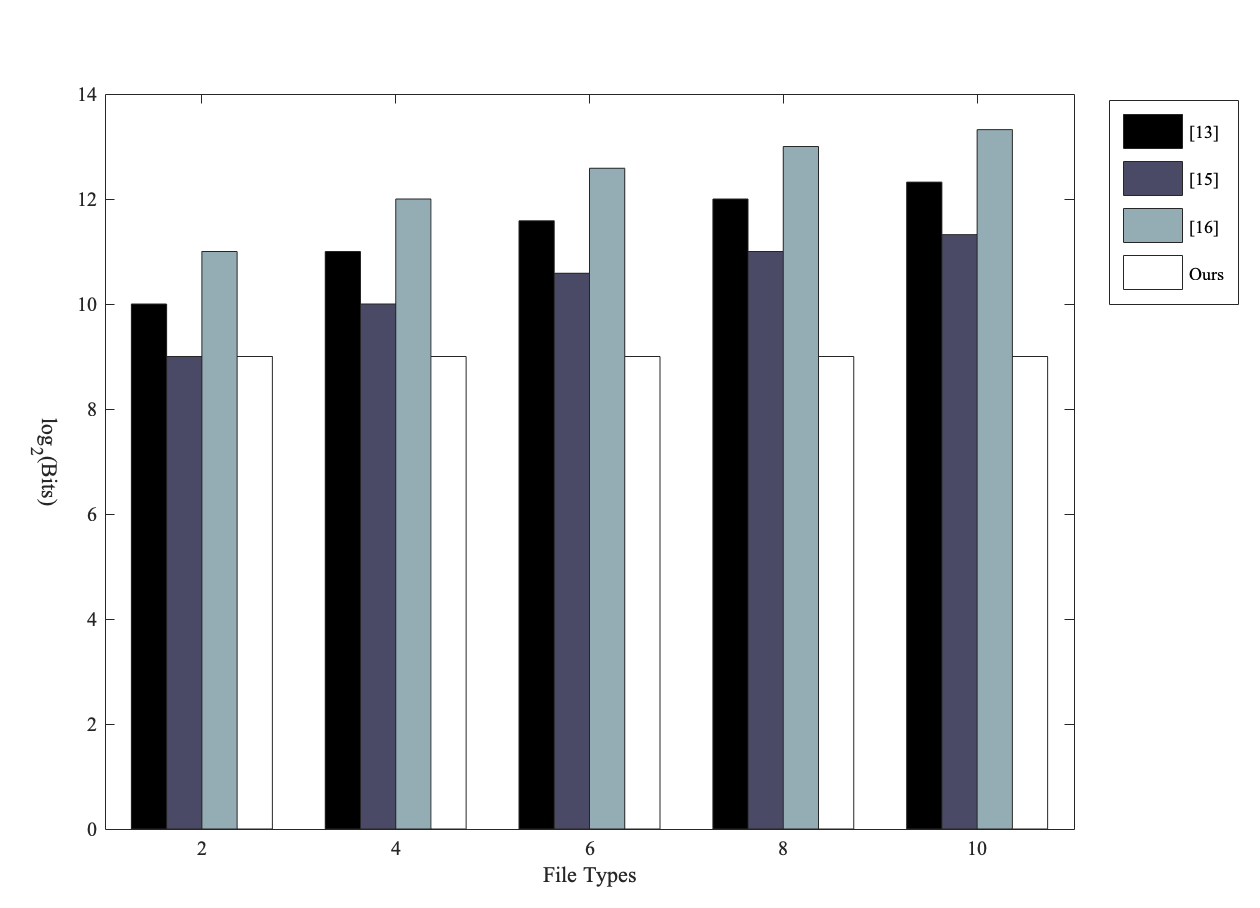}	
		\caption{Comparison of length of re-encryption keys.}
		\label{fig:rekeylengthcompare}
	\end{figure}
	\section{Conclusion}
	Cloud computing provides the benefits of fast access to data and rapid deployment by businesses or users, allowing people to access data on the cloud anytime, anywhere.
    Users do not need to do maintenance for cloud storage or pay attention to version upgrades at any time, and do not need to deal with other management issues. These features in the cloud greatly reduce the threshold for companies or users to store large amounts of data.
    This is one of the reasons why cloud computing is becoming more and more popular in various application environments. However, there are many security and privacy considerations that block the development of cloud computing.
    For example, users who value data privacy will first encrypt the data and then upload the encrypted file to the cloud. In this way, even if the files on the cloud are stolen,
    the hacker is also unable to decrypt the stolen data. Still, this method will cause great difficulties in cloud data sharing.
    For instance, Alice wants to share encrypted data with Bob, she must give her private key to Bob. It is not safe or practical to share a private key in real-world applications.
    Fortunately, Alice can adopt proxy re-encryption schemes to create a re-encryption key and then pass the re-encryption key to the cloud manager. When Bob wants to access Alice's data, the cloud manager uses Alice's re-encryption key to convert Alice's ciphertext into Bob's ciphertext.
    Unfortunately, PRE has a collusion problem. If the cloud administrator colludes with Bob, they can re-encrypt all of Alice's ciphertext into Bob's ciphertext, and use Bob's private key to decrypt Alice's ciphertext.
    In order to solve this problem, we can use C-PREs. In C-PREs, the users can manage the access control right of the ciphertext according to the condition. Therefore, the problem of collusion can be solved.
    However, we found that in CPRE, the number of re-encryption keys is proportional to the number of conditions, which is a huge burden for devices with small storage capacity.
    So we proposed KAPRE to reduce the number of re-encryption keys to a constant size. And in this paper, we prove that our proposed scheme is CCA secure, in addition to this, we also
    use the PBC library to implement our scheme and analyze it, it proves that our proposed scheme is not only theoretically safe but it can actually be applied to the real cloud applications.
	
	\label{ch:Conclusion}

	\ifCLASSOPTIONcompsoc
	
	\section*{Acknowledgments}
	\else
	regular IEEE prefers the form
	\section*{Acknowledgment}
	\fi
	
	This work was partially supported by Taiwan Information Security 
	Center at National Sun Yat-sen University (TWISC@NSYSU) and the 
	Ministry of Science and Technology of Taiwan under grant MOST 
	107-2218-E-110-014. It also was financially supported by the 
	Information Security Research Center at National Sun Yat-sen 
	University in Taiwan and the Intelligent Electronic Commerce 
	Research Center from The Featured Areas Research Center Program 
	within the framework of the Higher Education Sprout Project 
	by the Ministry of Education (MOE) in Taiwan. 

	\ifCLASSOPTIONcaptionsoff
	\newpage
	\fi

	\bibliographystyle{IEEEtran}

	\bibliography{main}

	\begin{IEEEbiography}[{\includegraphics[width=1in,height=1.25in,clip,keepaspectratio]{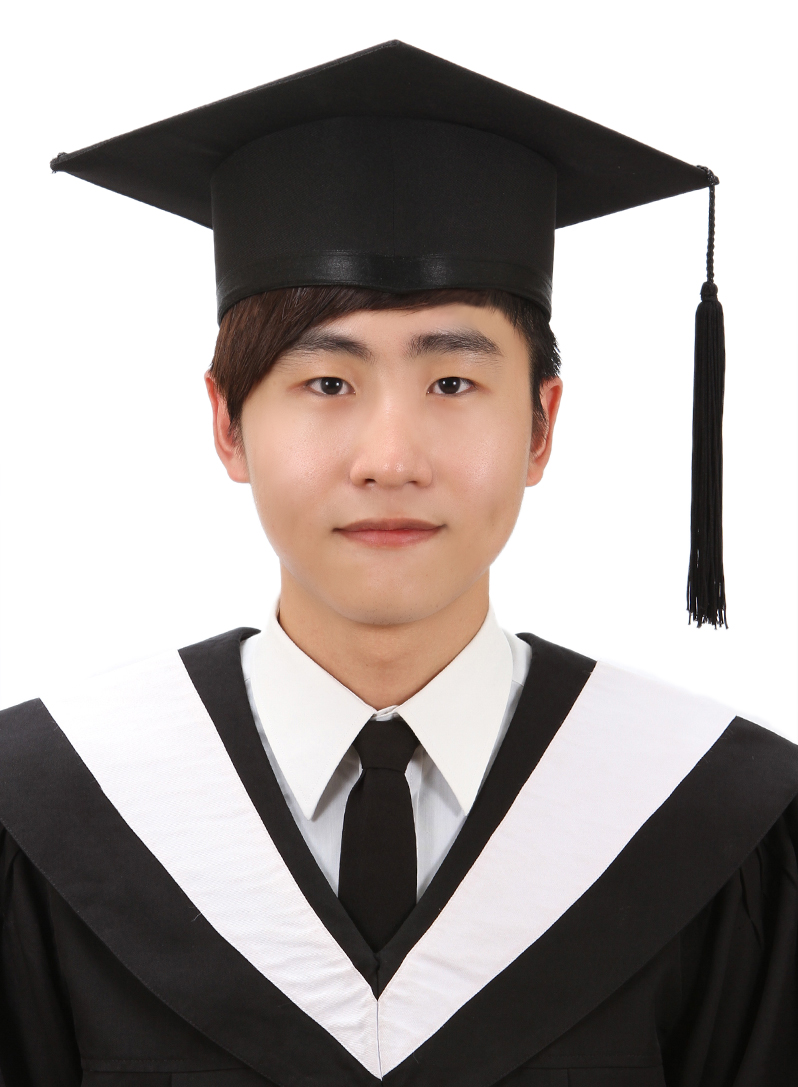}}]{Wei-Hao Chen} received his MS degree in computer science and engineering from National Sun Yat-sen University, Kaohsiung, Taiwan. His research interests include cloud computing and cloud storage, network and communication security, and applied cryptography. After completing his MS degree, he is currently a PhD student in the Department of Computer Science at Purdue University.
	\end{IEEEbiography}
	
	\begin{IEEEbiography}
		[{\includegraphics[width=1in,height=1.25in,clip,keepaspectratio]{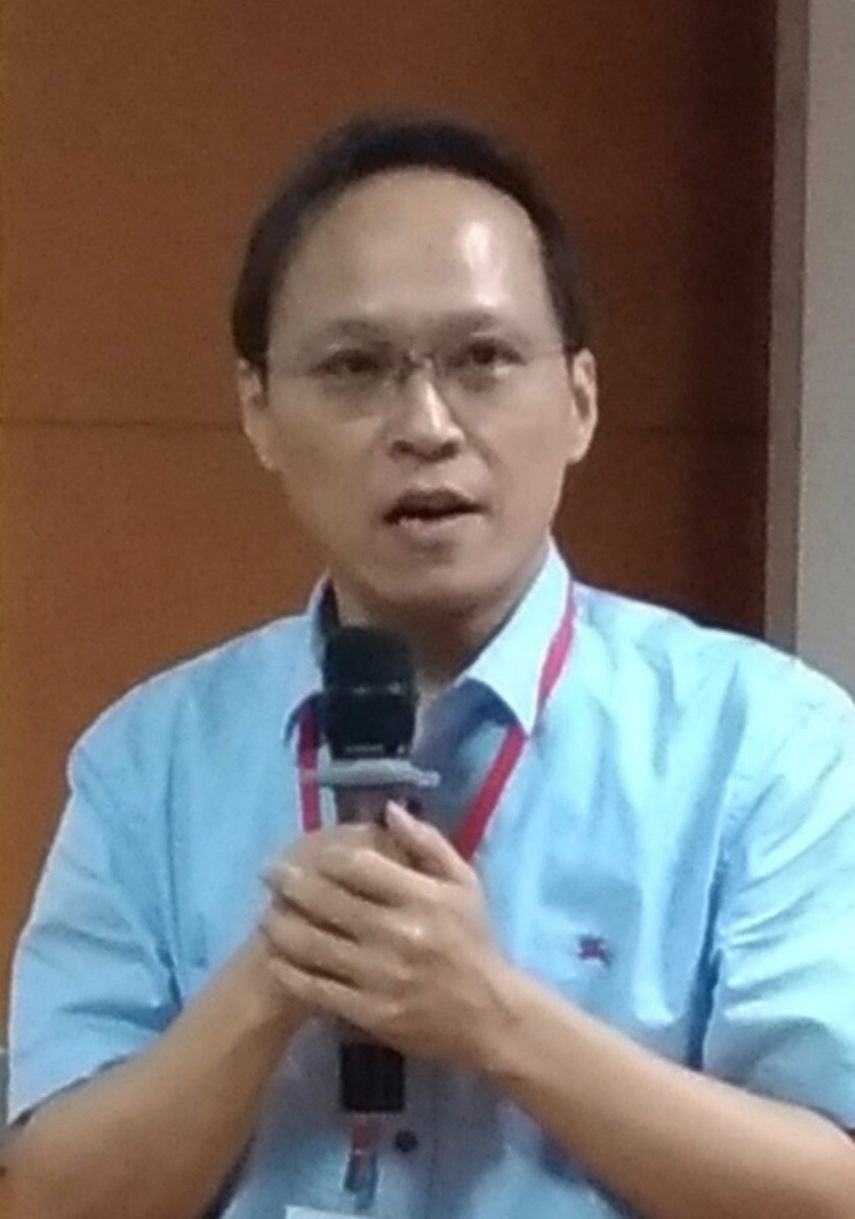}}]{Chun-I Fan}
		received the M.S. degree in computer science and information engineering from the National Chiao Tung University, Hsinchu, Taiwan, in 1993, and the Ph.D. degree in electrical engineering from the National Taiwan University, Taipei, Taiwan, in 1998. From 1999 to 2003, he was an Associate Researcher and a Project Leader with Telecommunication Laboratories, Chunghwa Telecom Company, Ltd., Taoyuan, Taiwan.  In 2003, he joined the faculty of the Department of Computer Science and Engineering, National Sun Yat-sen University, Kaohsiung, Taiwan. He has been a Full Professor since 2010 and a Distinguished Professor since 2019.  His current research interests include applied cryptology, cryptographic protocols, and information and communication security.
		Prof. Fan is the Chairman of Chinese Cryptology and Information Security Association (CCISA), Taiwan, and the Chief Executive Officer (CEO) of Telecom Technology Center (TTC), Taiwan. He also is the Director of Information Security Research Center and was the CEO of "Aim for the Top University Plan" Office at National Sun Yat-sen University.  He was the recipient of the Best Student Paper Awards from the National Conference on Information Security in 1998, the Dragon Ph.D. Thesis Award from Acer Foundation, and the Best Ph.D. Thesis Award from the Institute of Information and Computing Machinery in 1999. Prof. Fan won the Engineering Professors Award from Chinese Institute of Engineers - Kaohsiung Chapter in 2016 and he is also an Outstanding Faculty in Academic Research in National Sun Yat-sen University.
	\end{IEEEbiography}

	\begin{IEEEbiography}[{\includegraphics[width=1in,height=1.25in,clip,keepaspectratio]{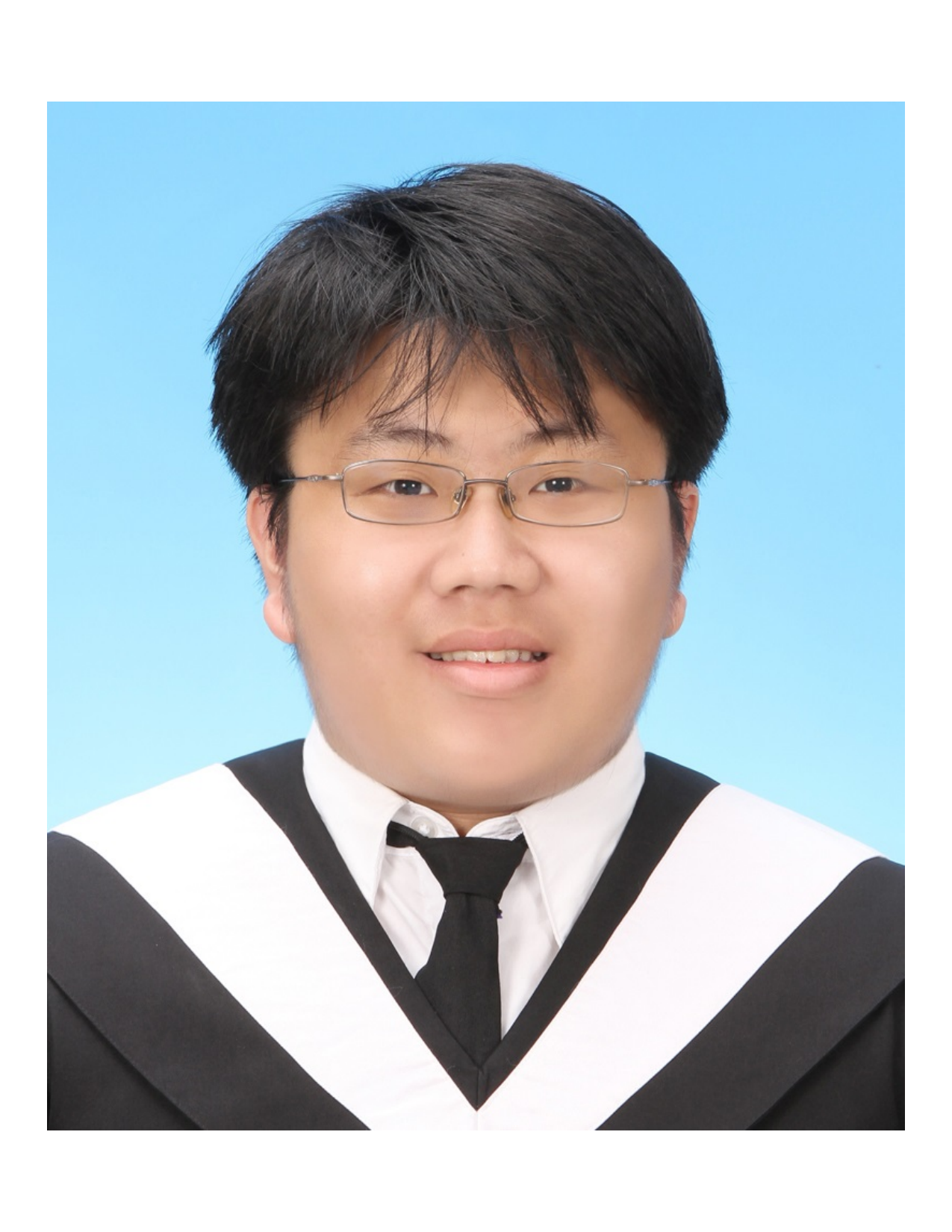}}]{Yi-Fan Tseng}
		was born in Kaohsiung, Taiwan. He received the MS degree and Ph.D. degree  in computer science and engineering from National Sun Yat-sen University, Taiwan, in 2014 and 2018, respectively. From 2018 to 2019, as a postdoctoral researcher, he joined the research group of Taiwan Information Security Center at National Sun Yat-sen University (TWISC@NSYSU). In 2019, he has joined the faculty of the Department of Computer Science, National Chengchi University, Taipei, Taiwan. His research interests
include cloud computing and security, network and communication security, information security, cryptographic protocols, and applied cryptography.
	\end{IEEEbiography}

\end{document}